\newcommand{\Cx}{{\mathbb C}}
\newcommand{\Ir}{\mathbb{Z}}
\newcommand{\Nl}{\mathbb{N}}
\newcommand{\Rl}{{\mathbb R}}
\newcommand{\Ex}{\mathbb{E}}
\newcommand{\idty}{\mathbb{1}}
\DeclareMathOperator{\id}{id}
\DeclareMathOperator*{\tr}{Tr}
\newcommand{\<}{\langle}
\renewcommand{\>}{\rangle}
\providecommand{\abs}[1]{\lvert#1\rvert}
\providecommand{\norm}[1]{\lVert#1\rVert}
\renewcommand{\c}[1]{\mathcal{#1}}
\newcommand{\g}[1]{\mathfrak{#1}}
\newcommand{\s}[1]{\mathsf{#1}}
\renewcommand{\r}[1]{\mathrm{#1}}
\newtheorem{lemma}{Lemma}
\newtheorem{proposition}{Proposition}
\newtheorem{thm}{Theorem}
\theoremstyle{definition}
\newtheorem{construction}{Construction}
\newtheorem{example}{Example}
\DeclareMathOperator*{\loplus}{\mbox{\Large\mbox{$\oplus$}}}
\DeclareMathOperator*{\lotimes}{\mbox{\Large\mbox{$\otimes$}}}
\newcommand{\hl}{}
\begin{document}
\begin{center}
{\LARGE Fermionic Markov Chains} \\[12pt]
M.~Fannes and J.~Wouters\footnote{Current affiliation: Meteorologisches Institut, University of Hamburg}\\[6pt]
\texttt{mark.fannes@fys.kuleuven.be} \\
\texttt{jeroen.wouters@zmaw.de}\\
\vspace{5pt}
Instituut voor Theoretische Fysica \\
K.U.Leuven, Belgium \\
\end{center}

\abstract{We study a quantum process that can be considered as a quantum analogue for the classical Markov process. We specifically construct a version of these processes for free Fermions. For such free Fermionic processes we calculate the entropy density. This can be done either directly using Szeg\"o's theorem for asymptotic densities of functions of Toeplitz matrices, or through an extension of said theorem to rates of functions, which we present in this article.} 

\section{Introduction}
\label{sec:intro}
Quantum channels describe the black box dynamics of small open quantum systems, i.e.\ a quantum system evolving in contact with an inaccessible environment. Technically, a channel is a completely positive map, which maps an input density matrix into an output density matrix. It corresponds to a one-shot random evolution of the system. A classical channel is a stochastic matrix.

In this article, we consider the construction of a quantum process associated to a channel, much like a stochastic matrix generates a Markov process. Introducing \hl{ in a quantum system} multi-time correlations that are compatible with a given channel is, however, much more delicate than in the classical context. Such an amplification to a process is not generally possible and, if possible, the process is not unique. We focus on entropic properties of such processes, both for determining extensions with minimal entropy and obtaining a measure of randomness in the given channel.

The construction uses a generalization of matrix product states which was introduced under the name of finitely correlated states, see~\cite{Accardi81,Fannes89}. While matrix product states prove to be a very useful class of pure states on quantum spin chains~\cite{Perez07,Klumper91}, suitable for studying ground state properties, one needs to go beyond such states in the context of channels \hl{due to the mixing of pure states by generic channels}. In fact, even in the classical context, the class of processes we consider includes hidden Markov processes.

In the study of the information carrying capacity of classical channels with memory, the entropy density of hidden Markov processes also arises~\cite{Wouters09}. The entropy density of finitely correlated states is expected to play a role in the information capacity of quantum channels with memory.

In this article, however, we consider a simpler problem and introduce a construction of free Fermionic Markov processes compatible with a free Fermionic channel. We show in particular that, \hl{instead of through a direct calculation}, the entropy density can also be obtained as the asymptotic entropy production under the shift dynamics.

The structure of the article is as follows. In Section~\ref{sec:class} we review properties of classical Markov and hidden Markov processes. We pay particular attention to a method to calculate the entropy density, based on a De~L'H\^opital-like property of strongly subadditive functions. In Section~\ref{sec:quant} we introduce the quantum version of hidden Markov processes. In Section~\ref{sec:fermionic} we consider a free Fermionic version of such processes. We then turn to the main result in Section~\ref{sec:density}. We show that the De~L'H\^opital-like property, which connects averages of functions to their growth rates, can be extended to a much wider class than the strongly subadditive functions. As the density matrices of free Fermionic systems are basically Toeplitz matrices, this amounts to an extension Szeg\"o's theorem for averages of functions to rates.

\section{The classical case}
\label{sec:class}

We first consider classical Markov and hidden Markov processes. Let $P$ be a stochastic matrix over a finite state space $\Omega = \{ 1, 2, \ldots, d \}$: the entries of the $d \times d$ matrix $P$ are non-negative and the row sums are equal to 1. The entry $P_{\omega_1 \omega_2}$ specifies the jump probability from state $\omega_1$ to state $\omega_2$ and hence $P$ defines a stochastic dynamics in discrete time. Generically $P$ has a non-degenerate eigenvalue 1 and so has its transpose $P^{\textsf T}$. The Perron-Frobenius theorem asserts that the absolute values of the eigenvalues are not larger than 1 and that the entries of the eigenvector of $P^{\textsf T}$ corresponding to the eigenvalue 1 can be chosen non-negative. A proper normalization provides us therefore with a probability vector $\mu$ over $\Omega$ such that $P^{\textsf T}\mu = \mu$. For a generic $P$, $\mu$ is faithful and we have exponentially fast convergence to the invariant measure: there exist $C \ge 0$ and $0 \le \gamma < 1$ such that
\begin{equation}
\bigl\Vert ( P^{\textsf T})^n \nu - \mu \bigr\Vert_1 \le C \gamma^n,\enskip n \in \Nl,\ \text{$\nu$ probability vector}.
\end{equation} 

Moreover, $P$ generates a natural stationary stochastic process by assigning to a path $(\omega_0, \omega_1, \ldots, \omega_n)$ the probability
\begin{equation}
\bigl\< \omega_0, \omega_1, \ldots, \omega_n \bigr\>_{[0,n]} = \mu_{\omega_0} P_{\omega_0\, \omega_1} P_{\omega_1\, \omega_2} \cdots P_{\omega_{n-1}\, \omega_n},\enskip \omega_j \in \Omega,\ n = 0, 1, \ldots
\label{mc}
\end{equation} 
This is a one step Markov process: the probability for reaching the state $\omega_n$ at time $n$ given the full history $(\omega_0, \omega_1, \ldots, \omega_{n-1})$ is the same as that for reaching $\omega_n$ starting at time $n-1$ at $\omega_{n-1}$.

The entropy production or rather the mean entropy of this Markov chain is a natural way to quantify the randomness of $P$
\begin{equation}
\s h(P) := \lim_{n \to \infty} \frac{1}{n}\, \s H\bigl( \<\ \>_{[0,n-1]} \bigr)
\label{mc:ent}
\end{equation}
Here, $\s H$ is the usual Shannon entropy of a probability vector. In fact, in the event of multiple stationary measures for $P$ the entropy also depends on the chosen initial measure.

For strongly subadditive function, such as the Shannon entropy, the following discrete version of De~L'H\^opital's rule can be proven\cite{Alicki01,Kay01}:
\begin{equation}
\lim_{n \to \infty} \frac{1}{n}\, \s H\bigl( \<\ \>_{[0,n-1]} \bigr) = \lim_{n \to \infty} \Bigl( \s H(\<\ \>_{[0,n]}) - \s H(\<\ \>_{[0,n-1]}) \Bigr).
\label{be}
\end{equation}
Using this equality a simple calculation shows that the entropy density of the Markov process (\ref{mc}) is given by
\begin{equation}
\s h(P) = \< \s H_\text{trans} \>_\mu \,, \label{eq:markov_density}
\end{equation}
where $\s H_\text{trans} (\omega)$ is the entropy of the conditional probabilities related to the transition from $\omega$ to the next state:
\begin{equation}
\s H_\text{trans} (\omega) = - \sum_\sigma P_{\omega \sigma} \log P_{\omega \sigma} \label{class_density} \,.
\end{equation}

%The construction of quantum Markov chains in the sense of construction~\ref{con1} is easily adaptable to classical systems by replacing $\c M_d$ by the real functions on a finite state space $\Omega$. Instead of identity preserving CP maps from $\c M_{d_1} \to \c M_{d_2}$, 

This construction is however difficult to carry over to quantum systems due to the prominent role of paths. The probabilistic nature of quantum mechanics and the non-uniqueness of the choice of basis for the Hilbert space of the system make paths an unnatural concept in quantum mechanics. A construction that is better suited for generalization is based on positive maps.
We consider stochastic matrices with $d$ rows and $d^2$ columns. An observable on the discrete state space $\Omega$ can be seen as a vector $f \in \Rl^d$ and we use the notation $\bm 1$ for the constant function 1, i.e.\ every entry of $\bm 1$ is equal to 1. 

\begin{construction}
\label{con2}
Let $P$ be a stochastic $d \times d$ matrix with invariant measure $\mu$ and let $Q$ be a $d \times d^2$ stochastic matrix that satisfies the compatibility condition
\begin{equation}
Q\, (f\otimes \bm 1) = Q\, (\bm 1 \otimes f) = P\, f,\enskip f \in \Rl^d.
\label{cmc}
\end{equation}
Any such $Q$ defines a stationary measure on the half-chain $\times^{\Nl} \Omega$ with marginals
\begin{equation}
\bigl\< f_n \bigr\> = \mu\Bigl( Q \bigl( Q \otimes \idty) \cdots \bigl( Q \otimes \idty \otimes \cdots \otimes \idty \bigr) \bigl( \bm 1 \otimes f_n \bigr) \Bigr),\enskip f_n \in \lotimes_0^{n-1} \Rl^d.
\label{cfcs}
\end{equation}
\begin{flushright}
$\square$
\end{flushright}
\end{construction}

The set of stochastic matrices $Q$ obeying~(\ref{cmc}) is closed, convex and non-empty. E.g.\ the Markov chain~(\ref{mc}) is obtained by choosing
\begin{equation}
Q(f \otimes g) = P(fg),\enskip f,g \in \Rl^d,
\label{marext}
\end{equation}
where $fg$ is the entrywise product of $f$ and $g$. For a general $Q$, the measure~(\ref{cfcs}) is a stationary hidden Markov process.

The mean entropy of a hidden Markov process can be computed using a method due to Blackwell, see~\cite{Fannes92} and~\cite{Blackwell57}. This computation is based on the discrete version of De~L'H\^opital's rule for the mean entropy given in Equation~\ref{be}. The relation between the $n$ and $n-1$ site marginals of a hidden Markov process is given by a transfer matrix like relation, as seen from~(\ref{cfcs}). This allows to express the mean entropy as an average of entropies of probability vectors over $\Omega$
\begin{equation}
\lim_{n \to \infty} \frac{1}{n}\, \s H\bigl( \<\ \>_n \bigr) = \int \!\varphi(d\nu)\, \s H^ \prime_\text{trans}(\nu).
\end{equation}  
In this formula, $\nu$ is varies over the set of probability vectors over $\Omega$ and $\varphi$ is a measure on this set of probability vectors. The measure $\varphi$ is the unique stationary measure of a dynamical system on probability vectors over $\Omega$ that is determined by $Q$.

This formula bears some similarity to Eq.~\ref{eq:markov_density}. The measure $\varphi$ weighs the possible past paths, much as the measure $\mu$ in Eq.~\ref{eq:markov_density} weighs the possible configuration at the previous time step. As the hidden Markov process also has correlations with previous time steps, this weighting term gets significantly more complicated. The entropy term $\s H^ \prime_\text{trans}(\nu)$ is an entropy function related to the transition from one state to the next, much as the function $\s H_\text{trans}(\nu)$ in Eq.~\ref{class_density} is determined by the transition probabilities $P_{\omega \sigma}$.

Numerical experiments suggest that among the extensions that satisfy the compatibility condition~(\ref{qmc}), the Markov chain extension~(\ref{marext}) has the smallest entropy.

\section{The quantum case}
\label{sec:quant}

The natural quantum analogue of a stochastic matrix is a unity preserving completely positive (CP) map $\Gamma$ acting on the $d$-dimensional complex matrices $\c M_d$. Such maps send a pure state into a mixed one and are therefore stochastic. Generically $\Gamma$ has a non-degenerate eigenvalue 1, the corresponding eigenvector of the transpose, mostly called the dual, is a faithful density matrix $\rho$ and exponentially fast return to equilibrium holds: for any initial density matrix $\sigma$
\begin{equation}
\bigl\Vert \sigma \circ \Gamma^n - \rho \bigr\Vert_1 \le C \gamma^n,\enskip n \in \Nl,\ \text{$\sigma$ density matrix}.
\end{equation} 

Unlike the classical case, where there is a trivial connection between stochastic matrices and Markov processes, there is no straightforward extension to a process. A first reason is that a general density matrix admits many convex decompositions in pure states, a mixed quantum state is not uniquely linked to an ensemble albeit that there is a preferred decomposition, namely the spectral decomposition. A second reason is that the map $\Gamma$ not only mixes pure states but also rotates them which prohibits a description in terms of paths.   

This makes quantifying the randomness of a CP $\Gamma$ not evident. At least two proposals can be found in the literature: the minimal output entropy~\cite{King01} and the map entropy~\cite{Zyczkowski_duality_2004}. Here we propose an approach in the spirit of Markov chains.

\begin{construction}
\label{con1}
Let $\Gamma: \c M_d \to \c M_d$ be a unity preserving CP map with invariant state $\rho$ and let $\Lambda$ be a unity preserving CP map from $\c M_d \otimes \c M_d$ to $\c M_d$ that satisfies the compatibility condition
\begin{equation}
\Lambda(X \otimes \idty) = \Lambda(\idty \otimes X) = \Gamma(X),\enskip X \in \c M_d.
\label{qmc}
\end{equation}
The quantum Markov chain defined by $\Lambda$ and $\rho$ is then the finitely correlated state $\omega$~\cite{Fannes92-1} on the semi-infinite quantum spin chain $\otimes^{\Nl_0} \c M_d$ with marginals $\rho_n$ on the sites $[1, 2, \ldots, n]$ given by
\begin{equation}
\begin{split}
\omega\bigl( X_n \bigr) 
&= \tr \bigl( \rho_n X_n \bigr) \\
&= \tr \bigl( \rho\, \Lambda \circ (\Lambda \otimes \id) \circ \cdots \circ (\Lambda \otimes \id \otimes \cdots \otimes \id) (\idty \otimes X_n) \bigr)
\end{split}
\label{fcs}
\end{equation}
where $X_n \in \lotimes_1^n \c M_d$.\begin{flushright}
$\square$
\end{flushright}
\end{construction}
Note that this construction contains the class of classical hidden Markov processes.

We can now associate an entropy to a quantum Markov chain as in the classical case
\begin{equation}
\s h(\Lambda,\rho) := \s s(\omega) := \lim_{n \to \infty} \frac{1}{n}\, \s S(\rho_n)
\label{qme}
\end{equation}
where $\s S$ is the usual von~Neumann entropy. Generically, $\Gamma$ has a non-degenerate eigenvalue 1 so that $\rho$ is uniquely determined by $\Gamma$ and that there is no $\rho$ dependence in~(\ref{qme}). Clearly a number of issues have to be addressed: for which $\Gamma$ can one find a $\Lambda$ that satisfies~(\ref{qmc})? How does $\s h(\Gamma, \Lambda)$ depend on $\Lambda$? Can one compute the mean entropy~(\ref{qme})?

\section{A Fermionic model}
\label{sec:fermionic}

Quantum states are mostly indirectly given, typically as ground or equilibrium states for a given interaction and are hence difficult to work with as there is for example no explicit density matrix. Also, in general one has to deal with an enormous amount of parameters when the number of particles grows. As the number of components grows, typically the number of parameters grows exponentially. Free Fermionic states~\cite{Shale64,Powers70,Balslev68} form an exception in both respects. These states describe systems of non-interacting fermions. They are given by an explicit recipe, reducing the calculation of higher order correlation to a simple combinatorial combination of second order correlations. Hence not only can they be calculated explicitly, they are also fully described by their second order correlation, resulting in a significant reduction in parameters. 

In this section we first discuss some of the properties of free Fermionic states. We then introduce a Markov construction similar to the one given in Section~\ref{sec:quant}.

\subsection{Free states and maps}
\label{s3:1}

The algebra $\g A(\g H)$ generated by the canonical anticommutation relations (CAR) describes the observables of a system of Fermions with one-particle space $\g H$. It is generated by the identity and the creation and annihilation operators $a^*(\varphi)$ and $a(\varphi)$ that obey the relations
\begin{align}
&\phantom{i}i)\enskip \varphi \mapsto a^*(\varphi) \text{ is $\Cx$-linear} \\
&ii)\enskip \{ a(\varphi) \,,\, a(\psi) \} = 0 \enskip\text{and}\enskip \{ a(\varphi) \,,\, a^*(\psi) \} = \< \varphi \,,\, \psi \> \idty.
\end{align}

A useful set of states, called free, quasi-free, Gaussian, or determinantal, is determined by a simple combinatorial rule. Given a symbol $Q \in \g B(\g H)$, the state $\omega_Q$ vanishes on every monomial except for
\begin{equation}
\omega_Q\bigl( a^*(\varphi_1) \cdots a^*(\varphi_n) a(\psi_n) \cdots a(\psi_1) \bigr) = \det\bigl( \bigl[\< \psi_k \,,\, Q \varphi_\ell \> \bigr] \bigr).
\label{qfs}
\end{equation} 
Positivity holds if and only if $0 \le Q \le \idty$. The set of symbols
\begin{equation}
\c Q(\c H) = \{ Q \mid Q \text{ linear operator on } \c H \text{ such that } 0 \le Q \le \idty\}
\end{equation}
is convex and weakly compact. Convexity at the level of symbols is very different from convexity at the level of the free states. Nevertheless it can be shown that a free state is pure, i.e.\ extreme in the full state space of $\c A(\c H)$, if and only if its symbol is an orthogonal projector, i.e.\ an extreme point of $\c Q(\c H)$.

Important quantities like the entropy of free states are expressible in terms of symbols, e.g.\ 
\begin{equation}
\s S(Q) = - \tr Q \log Q - \tr (\idty - Q) \log (\idty - Q).
\label{qfent}
\end{equation}

Let $P$ be an orthogonal projection on $\c H$, then the restriction of the free state $\omega_Q$ on $\c A(\c H)$ is a free state on the sub-CAR algebra $\c A(P\c H)$ with symbol $PQP$. Conversely, a pair of free states $\omega_{Q_i}$ on $\c A(\c H_i)$, $i=1,2$ extends to a free state $\omega_{Q_1 \oplus Q_2}$ on $\c A(\c H_1 \oplus \c H_2)$ by putting
\begin{equation}
\omega_{Q_1 \oplus Q_2} (X_1 X_2) = \omega_{Q_1}(X_1)\, \omega_{Q_2}(X_2),\enskip X_i \in \c A(\c H_i).
\end{equation} 

Free, identity preserving, CP maps $\Lambda_{A,B}: \g A(\g H) \to \g A(\g K)$ are determined by a pair of linear operators $A: \g H \to \g K$ and $B: \g H \to \g H$. For monomials of degree two
\begin{equation}
\Lambda_{A,B} \bigl( a^*(\varphi)a(\psi) \bigr) = a^*(A\varphi) a(A\psi) + \< \psi \,,\, B\varphi \> \idty.
\label{qfm}
\end{equation} 
For more details, see~\cite{Dierckx08}. Complete positivity holds if and only if $0 \le B \le \idty - A^*A$. As for free states, we introduce the set
\begin{equation}
\begin{split}
\c{CP}(\c H, \c K) = \bigl\{ (A,B) \,\bigm|\, 
&A: \c H \to \c K \text{ and } B: \c H \to \c H \text{ linear} \\
&\text{operators such that } 0 \le B \le \idty - A^*A  \bigr\}.
\end{split}
\end{equation}
We use $\c{CP}(\c H)$ for $\c{CP}(\c H, \c H)$. The set of free, CP maps extends that of free states by putting
\begin{equation}
Q \in \c Q(\c H) \mapsto (0,Q) \in \c{CP}(\c H, \c K).
\end{equation}
Another special distinguished class of maps are the free homomorphism from $\c A(\c H)$ to $\c A(\c K)$
\begin{equation}
\{ (V,0) \in \c{CP}(\c H, \c K) \mid V: \c H \to \c K \text{ isometric} \}.
\end{equation}
The set $\c{CP}(\c H, \c K)$ is also convex and weakly compact. Free CP maps transform free states into free states and one checks from~(\ref{qfs}) and~(\ref{qfm}) that
\begin{equation}
\omega_Q \circ \Lambda_{A,B} = \omega_{A^*QA + B}.
\end{equation}

The construction of the quantum Markov process consists of using a completely positive map to contract the observable and then applying a single-party state that
is invariant under a completely positive map. We have the following lemma concerning the existence of
such invariant states.
\begin{lemma}
\label{lem2}
Let $\Lambda_{A,B}$ be a completely positive free transformation of $\c A(\c H)$ as in~(\ref{qfm}) and assume that $\dim(\c H) < \infty$, then $\Lambda_{A,B}$ has a unique invariant state if and only if $\norm A < 1$. Moreover, the unique invariant state is free with symbol $Q$ determined by
\begin{equation}
Q = A^*Q\,A + B.
\label{lem2:1}
\end{equation}
\end{lemma}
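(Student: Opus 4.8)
The plan is to work entirely at the level of symbols, using the identity $\omega_Q \circ \Lambda_{A,B} = \omega_{A^*QA + B}$ recorded above, so that invariance of a free state $\omega_Q$ is exactly the fixed-point equation~(\ref{lem2:1}), $Q = \Phi(Q)$ with $\Phi(Q) := A^*QA + B$. For the sufficiency direction I would first assume $\norm A < 1$ and note that $\Phi$ is an affine self-map of the operators on $\c H$ with $\norm{\Phi(Q) - \Phi(Q')} = \norm{A^*(Q-Q')A} \le \norm A^2\, \norm{Q - Q'}$; since $\norm A^2 < 1$ this is a strict contraction, so the Banach fixed-point theorem gives a unique solution, represented by the norm-convergent series $Q = \sum_{k \ge 0} (A^*)^k B\, A^k$. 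To see $Q \in \c Q(\c H)$ I would run the monotone iteration $Q_0 = 0$, $Q_{n+1} = \Phi(Q_n)$: by induction $0 \le Q_n \le \idty$, using $B \ge 0$ for the lower bound and $Q_{n+1} = A^*Q_n A + B \le A^*A + B \le \idty$ for the upper bound, the last step being precisely the CP constraint $B \le \idty - A^*A$. Thus the limit $Q$ satisfies $0 \le Q \le \idty$ and $\omega_Q$ is a genuine invariant free state solving~(\ref{lem2:1}).

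The second step upgrades uniqueness from free states to all states. Composing the symbol action shows $\Lambda_{A,B}^{\,n} = \Lambda_{A^n, B_n}$ with $B_n = \sum_{k=0}^{n-1}(A^*)^k B\, A^k$, so that when $\norm A < 1$ we have $A^n \to 0$ and $B_n \to Q$, i.e.\ $(A^n,B_n) \to (0,Q)$. Since $\Lambda_{A,B}(X)$ is, for each fixed monomial $X$, a polynomial in the entries of $A$ and $B$ (the combinatorial rule behind~(\ref{qfm})), it follows that $\Lambda^n(X) \to \Lambda_{0,Q}(X) = \omega_Q(X)\,\idty$, using the embedding $Q \mapsto (0,Q)$ of states into maps. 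Hence for any invariant state $\sigma$ one gets $\sigma(X) = \sigma\bigl(\Lambda^n(X)\bigr) \to \omega_Q(X)$, forcing $\sigma = \omega_Q$. This proves existence and global uniqueness of the invariant state under $\norm A < 1$.

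The necessity direction is where I expect the real difficulty. Since the CP constraint already forces $A^*A \le \idty$, i.e.\ $\norm A \le 1$, the contrapositive amounts to showing that $\norm A = 1$ is incompatible with a unique invariant state. The natural attempt is to pick, in finite dimension, a unit vector $\varphi$ realising the norm, so $A^*A\varphi = \varphi$ and therefore $\<\varphi, B\varphi\> \le \<\varphi, (\idty - A^*A)\varphi\> = 0$, giving $B\varphi = 0$; one then tries to manufacture a second fixed symbol of $\Phi$, or a peripheral eigenvector, supported on the maximal-gain subspace of $A$. The obstacle is that $\norm A = 1$ only destroys strict contractivity of $\Phi$ and does not immediately produce a second fixed point, so the crux of the argument is to show that the norm-one directions of $A$ organise into an $A$-invariant subspace on which $\Phi$ acts isometrically. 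I expect this passage from a single maximal singular direction to an honest degeneracy of the invariant state to be the main technical hurdle, and to require exploiting the boundary geometry of the constraint $B \le \idty - A^*A$ together with the structure of $A$ supplied by the free-Markov construction.
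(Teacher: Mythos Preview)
Your sufficiency argument is essentially the paper's: Banach fixed point for the symbol map $\Phi(Q)=A^*QA+B$, the monotone iteration from $0$ to land in $\c Q(\c H)$, and then $\Lambda_{A,B}^{\,n}=\Lambda_{A^n,B_n}\to\Lambda_{0,Q}=\omega_Q(\cdot)\,\idty$ to upgrade uniqueness from free invariant states to all invariant states. The paper compresses this last step into the single line $\lim_n\Lambda_{A,B}^n=\omega_Q$, but the content is the same.

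For necessity the paper is much shorter than you: it simply asserts that $\norm A<1$ is \emph{equivalent} to the homogeneous equation $Q=A^*QA$ having only the trivial solution, and infers that $\norm A=1$ would spoil uniqueness of~(\ref{lem2:1}). Your instinct that this direction hides a genuine difficulty is correct, and in fact the obstacle you isolate --- that $\norm A=1$ kills strict contractivity without automatically producing a second fixed point --- is real. The asserted equivalence holds when $A$ is normal (a unimodular eigenvalue $\alpha$ with eigenvector $v$ gives $A^*\abs{v}\!\>\<\!v\abs{}A=\abs{v}\!\>\<\!v\abs{}$), but fails in general: with $\c H=\Cx^2$ and $A=\bigl(\begin{smallmatrix}0&1\\0&0\end{smallmatrix}\bigr)$ one has $\norm A=1$ yet $A^2=0$, so $Q\mapsto A^*QA$ is nilpotent and its only fixed point is $0$. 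Taking $B=b\,\abs{e_1}\!\>\<\!e_1\abs{}$ with $0\le b\le1$ satisfies $0\le B\le\idty-A^*A$, the symbol equation has the \emph{unique} solution $Q=b\,\idty$, and already $\Lambda_{A,B}^{\,2}=\Lambda_{0,\,b\idty}=\omega_{b\idty}(\cdot)\,\idty$, so the invariant state is unique despite $\norm A=1$. Thus the hurdle you flagged is not a gap in your strategy but in the ``only if'' clause of the lemma as stated; the paper's one-line necessity argument does not address it. Your more careful analysis via the maximal-singular subspace and the boundary of $B\le\idty-A^*A$ is the right place to look, but it will only recover the converse under an extra hypothesis on $A$ (e.g.\ normality, or $\rho(A)=\norm A$).
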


\begin{proof}
The condition $\norm A < 1$ is equivalent to the non-existence of non-trivial solutions to the homogeneous equation $Q = A^*Q\, A$. It has to be satisfied to have uniqueness of the solution of the invariance condition~(\ref{lem2:1}) for symbols. Conversely, suppose that $\norm A < 1$, then there exists by the fixed point theorem for contractions a unique $Q$ such that 
\begin{equation}
Q = A^*Q\,A + B.
\end{equation}
This $Q$ satisfies $0 \le Q \le \idty$ as we may obtain $Q$ by iterating the map $X \mapsto A^*X\,A + B$ with initial value 0. It is then easily checked that
\begin{equation}
\lim_{n\to\infty} \Lambda_{A,B}^n = \omega_Q
\end{equation}
which guarantees both the uniqueness of the invariant state and its free character.
\end{proof}

\subsection{Constructing a chain}
\label{s3:2}

We now have the necessary ingredients to introduce the free Fermionic counterpart of Construction~\ref{con1}. There are natural embeddings
\begin{equation}
a(\varphi) \mapsto a(\varphi \oplus 0) \enskip\text{and}\enskip a(\psi) \mapsto a(0 \oplus \psi)
\label{car:emb}
\end{equation} 
of $\g A(\c H)$ and $\g A(\c K)$ into $\g A(\c H \oplus \c K)$. Both factors together generate the large algebra and they satisfy graded commutation relations as creation operators in different factors anticommute. We can transport the construction of the quantum Markov chain~(\ref{qmc},\ref{fcs}) and its entropy~(\ref{qme}) to the free Fermionic setting. The spin chain algebra $\otimes^{\Nl} \c M_d$ is replaced by a semi-infinite Fermionic chain $\g A(\loplus^{\Nl} \c H)$ where $\g A(\c H)$ is now the one site algebra. 

The basic ingredient is a free CP transformation $\Lambda_{A,B}$ of $\g A(\c H)$ and we look for free CP maps $\Lambda_{C,D}$ from $\g A(\c H \oplus \c H)$ to $\g A(\c H)$ such that
\begin{equation}
\label{eq:qf_compatibility}
\Lambda_{C,D} \circ \jmath_1 = \Lambda_{C,D} \circ \jmath_2 = \Lambda_{A,B}.
\end{equation}
Here, $\jmath_1$ and $\jmath_2$ are the natural embeddings of $\g A(\c H)$ into the first and second factor of $\g A(\c H \oplus \c H)$
\begin{equation}
\jmath_1(a(\varphi)) = a(\varphi \oplus 0) \enskip\text{and}\enskip \jmath_2(a(\varphi)) = a(0 \oplus \varphi).
\end{equation}
Applying the compatibility condition~(\ref{eq:qf_compatibility}) to monomials $a(\varphi)$ and $a^*(\varphi)a(\psi)$ we see that
\begin{equation}
C = \begin{bmatrix} A & A \end{bmatrix} \enskip\text{and}\enskip D = \begin{bmatrix} B & X \\ X^* & B \end{bmatrix},
\label{qfcomp}
\end{equation}
where $X$ is as of yet undetermined and allows for some freedom in the choice of $D$. Because of the structure of free CP maps, the compatibility conditions~(\ref{qfcomp}) are not only necessary but also sufficient and we can rephrase the whole construction on the level of symbols. Doing so, graded tensor products become direct sums.

\begin{construction}
\label{con3}
Let $(A,B) \in \c{CP}(\c H)$ and let $Q \in \c Q(\c H)$ be such that $\omega_Q$ is invariant under $\Lambda_{A,B}$:
\begin{equation}
Q = A^* Q A + B.
\label{inv}
\end{equation}
Let $X: \c H \to \c H$ satisfy the compatibility condition
\begin{equation}
(C,D) \in \c{CP}(\c H \oplus \c H, \c H) \enskip\text{with}\enskip C \text{ and } D \text{ as in~(\ref{qfcomp})}.
\label{qfcomp2}
\end{equation}
The free Markov chain defined by $X$ and $Q$ is the symbol
\begin{equation}
Q_\infty = \underset{n\to\infty}{\text{w-lim}}\, P_n R_n P_n^*\enskip \text{on } \oplus^{\Nl} \c H 
\label{qinf}
\end{equation}
where
\begin{align}
&P_n: \c H \oplus \Bigl( \oplus_{k=0}^{n-1} \c H \Bigr) \to \Bigl( \oplus_{k=0}^{n-1} \c H \Bigr): \varphi \oplus \psi_n \mapsto \psi_n \\ 
&R_0 = Q \enskip\text{and}\enskip R_{n+1} = \Bigl( C^* \oplus \left( \oplus^n \idty \right) \Bigr) R_n \Bigl(C \oplus \left( \oplus^n \idty \right) \Bigr) + \Bigl( D\oplus \left( \oplus^n 0 \right) \Bigr).
\label{rn}
\end{align}
\begin{flushright}
$\square$
\end{flushright}
\end{construction}  

%We can now associate an entropy to such a Fermionic Markov chain using~(\ref{qfent})
%\begin{align}
%\s h(X,Q) 
%&:= \lim_{n \to \infty} \frac{1}{n}\, \s S(P_n R_n P_n^*) \\
%&:= \lim_{n \to \infty} \frac{1}{n}\, \Bigl( - \tr P_n R_n P_n^* \log(P_n R_n P_n^*) \nonumber \\
%&\phantom{:= \lim_{n \to \infty} \frac{1}{n}\, \Bigl(\ }- \tr (\idty - P_n R_n P_n^*) \log(\idty - P_n R_n P_n^*) \Bigr).
%\label{qfme}
%\end{align}

There is some freedom in choosing the channel $\Lambda_{C,D}$.
The operator $X$ has to be chosen such that the $\Lambda_{C,D}$ is completely positive, i.e.
$0 \leq D \leq \idty - C^* C$. One may wonder if and when this is possible. The question of existence of compatible channels is answered by the following lemma.
\begin{lemma}
\label{lem1}
The compatibility condition~(\ref{qfcomp2}) is satisfiable if and only if
\begin{equation}
A^*A \le \min \bigl( \{\tfrac{1}{2}\, \idty, \idty - B \}\bigr).
\end{equation}
\end{lemma}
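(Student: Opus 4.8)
The plan is to push the whole question down to the level of symbols, where it becomes a statement about $2\times 2$ operator matrices. By the structural description of free CP maps already invoked after~(\ref{qfcomp}), condition~(\ref{qfcomp2}) is satisfiable exactly when there is an $X$ with $0\le D\le\idty - C^*C$ on $\c H\oplus\c H$, where $C=\begin{bmatrix}A & A\end{bmatrix}$ and $D=\begin{bmatrix}B & X\\ X^* & B\end{bmatrix}$. A one-line computation gives $C^*C=\begin{bmatrix}A^*A & A^*A\\ A^*A & A^*A\end{bmatrix}$, so everything reduces to: can the prescribed diagonal $(B,B)$ be completed by an off-diagonal block $X$ so that the resulting matrix is squeezed between $0$ and $\idty - C^*C$?

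The key simplification is to diagonalize the ambient operator with the selfadjoint unitary $U=\tfrac1{\sqrt2}\begin{bmatrix}\idty & \idty\\ \idty & -\idty\end{bmatrix}$. One finds $U(\idty - C^*C)U^*=\begin{bmatrix}\idty - 2A^*A & 0\\ 0 & \idty\end{bmatrix}$, so already the bare requirement $\idty - C^*C\ge0$ forces $A^*A\le\tfrac12\idty$, which is one of the two bounds. I would then argue that $X$ may be taken selfadjoint: splitting $X$ into its selfadjoint and anti-selfadjoint parts $X_0$ and the part governed by $H$, the rotated matrix $UDU^*$ has diagonal blocks $B\pm X_0$ while its off-diagonal blocks vanish precisely when $H=0$; since the target $U(\idty - C^*C)U^*$ is block diagonal, a nonzero $H$ can only tighten both of the inequalities $0\le UDU^*\le U(\idty - C^*C)U^*$, so $H=0$ is optimal. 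With $H=0$ the two constraints decouple into $0\le B+X_0\le\idty - 2A^*A$ and $0\le B-X_0\le\idty$. Setting $Y:=B+X_0$, existence of a compatible $X$ is equivalent to splitting $2B=Y+Y'$ with $0\le Y\le\idty - 2A^*A$ and $0\le Y'\le\idty$.

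For the \emph{if} direction I would try the explicit choice $Y=(2B-\idty)_+$. Then $Y'=2B-Y$ is the spectrally defined $\min(2B,\idty)$, which automatically lies in $[0,\idty]$, and $Y\ge0$ is clear; the single remaining inequality $Y\le\idty - 2A^*A$ reads $(2B-\idty)_+\le\idty - 2A^*A$, which—using that $\tfrac12\idty$ is central, so the minimum is computed by ordinary functional calculus—is exactly $A^*A\le\min\bigl(\tfrac12\idty,\idty - B\bigr)$, closing sufficiency. The \emph{only if} direction is where I expect the real work to sit. Positivity of $\idty - C^*C$ gives $A^*A\le\tfrac12\idty$, and the $(1,1)$-block of $\idty - C^*C - D\ge0$ gives $A^*A\le\idty - B$, but these are the two bounds taken \emph{separately}; promoting them to the joint operator minimum is delicate, since the Löwner order is only an anti-lattice and a bound against each of $\tfrac12\idty$ and $\idty-B$ need not imply a bound against their functional-calculus minimum. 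The mechanism I would rely on is the kernel constraint: $D$ must annihilate the kernel of $\idty - C^*C$, which the diagonalization identifies with the (doubled) $\tfrac12$-eigenspace of $A^*A$, forcing $B\pm X_0$ to vanish there and rigidly tying $X_0$ to $B$ exactly where $A^*A$ saturates $\tfrac12\idty$. Turning this rigidity into the sharp joint inequality is the main obstacle and the step I would treat most carefully.
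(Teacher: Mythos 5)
Your reduction and your \emph{if} half are correct, and they are essentially the paper's own argument in cleaner packaging: the paper factors the two positivity constraints through contractions $S$ and $T$ and passes to Hermitian parts, landing on exactly your decoupled pair $0 \le B + X_0 \le \idty - 2A^*A$, $0 \le B - X_0 \le \idty$; and your sufficiency witness $Y = (2B-\idty)_+$ is precisely the paper's witness $\max\bigl(\{\idty, 2B\}\bigr)$ shifted by $\idty$. Up to that point the two proofs agree.

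The \emph{only if} half, however, is not a gap that can be closed, and your own warning about the L\"owner order being an anti-lattice is the reason: the implication is false. On $\c H = \Cx^2$ take $w = \tfrac{1}{2\sqrt2}$,
\begin{equation*}
B = \begin{pmatrix} \tfrac34 & 0 \\ 0 & \tfrac14 \end{pmatrix}, \qquad
X = \begin{pmatrix} 0 & w \\ w & 0 \end{pmatrix}, \qquad
A = \Bigl(\tfrac12\,(\idty - B - X)\Bigr)^{1/2},
\quad\text{so}\quad
A^*A = \begin{pmatrix} \tfrac18 & -\tfrac w2 \\[2pt] -\tfrac w2 & \tfrac38 \end{pmatrix}.
\end{equation*}
Each of $B+X$, $B-X$, $\idty - B + X$, $\idty - B - X$ has determinant $\tfrac1{16}>0$ and positive diagonal, hence is positive. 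Consequently $A^*A \ge 0$, $(A,B) \in \c{CP}(\c H)$ (since $\idty - A^*A - B = \tfrac12(\idty - B + X) \ge 0$), the rotated blocks $B \pm X$ of $D$ are positive so $D \ge 0$, and
\begin{equation*}
\idty - C^*C - D = \begin{bmatrix} E & -E \\ -E & E \end{bmatrix}, \qquad E = \tfrac12\,(\idty - B + X) \ge 0,
\end{equation*}
which is positive. So the compatibility condition~(\ref{qfcomp2}) \emph{is} satisfiable (this $X$ itself works). Yet
\begin{equation*}
\min\bigl(\{\tfrac12\,\idty, \idty - B\}\bigr) - A^*A = \begin{pmatrix} \tfrac18 & \tfrac w2 \\[2pt] \tfrac w2 & \tfrac18 \end{pmatrix}, \qquad \det = \tfrac1{64} - \tfrac1{32} < 0,
\end{equation*}
so $A^*A \not\le \min\bigl(\{\tfrac12\,\idty, \idty - B\}\bigr)$. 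Note that $\idty - 2A^*A = B + X$ is invertible here, so the kernel-rigidity mechanism you proposed has nothing to act on. What your suspicion has actually located is an error in the paper itself: its proof ends with the bare assertion that $[\idty, \idty + 2B] \cap [2B, 2\idty - 2A^*A] \ne \emptyset$ holds \emph{if and only if} $\max\bigl(\{\idty, 2B\}\bigr) \le 2\idty - 2A^*A$; the ``if'' is your sufficiency argument, but the ``only if'' is exactly the anti-lattice promotion you distrusted, and the example above (where $\idty + B + X = 2\idty - 2A^*A$ lies in the intersection) refutes it.

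Nor can the lemma be repaired by reading the minimum as the two separate bounds $A^*A \le \tfrac12\idty$ and $A^*A \le \idty - B$: for $B = \operatorname{diag}(\tfrac23, 0)$ and $A^*A = \tfrac14\begin{pmatrix} 1 & -1 \\ -1 & 1 \end{pmatrix}$ both bounds hold and $(A,B) \in \c{CP}(\c H)$, but $-B \le X \le B$ forces $X = \operatorname{diag}(x,0)$ (the kernel of $B$ pins down the second row and column), and then positivity of $\idty - 2A^*A - B - X$ forces $x \le -\tfrac23$ while $X \ge B - \idty$ forces $x \ge -\tfrac13$, so no admissible $X$ exists. Thus the operator minimum is sufficient but not necessary, the separate bounds are necessary but not sufficient, and the true satisfiability criterion lies strictly between the two. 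The honest conclusion of your (correct) argument is only the implication $A^*A \le \min\bigl(\{\tfrac12\,\idty, \idty - B\}\bigr) \Rightarrow \text{satisfiability}$; Lemma~\ref{lem1} as stated, and the final step of the paper's proof, are wrong.
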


\begin{proof}
We look for the necessary and sufficient conditions for the existence of a $X: \c H \to \c H$ such that
\begin{equation}
\begin{bmatrix} B &X \\ X^* &B \end{bmatrix} \ge 0 \enskip\text{and}\enskip \begin{bmatrix} \idty - A^*A - B &-A^*A - X \\ -A^*A - X^* &\idty - A^*A - B \end{bmatrix} \ge 0.
\end{equation}
Clearly $0 \le B \le \idty - A^*A$ as $(A,B) \in \c{CP}(\c H)$. The remaining positivity conditions are then the existence of $S$ and $T$ with 
\begin{equation}
\begin{split}
&\norm S \le 1,\enskip \norm T \le 1, \enskip X = B^{\frac{1}{2}} S B^{\frac{1}{2}}, \enskip \text{and } \\
&A^*A + X = (\idty - A^*A - B)^{\frac{1}{2}} T (\idty - A^*A - B)^{\frac{1}{2}}.
\end{split}
\end{equation}
Replacing $S$ and $T$ by their Hermitian parts, we may restrict to Hermitian $X$ and so we need 
\begin{equation}
[-B \,,\, B] \cap [-\idty + B \,,\, \idty - 2A^*A - B ] \neq \emptyset
\end{equation} 
or, equivalently, that
\begin{equation}
[\idty \,,\, \idty + 2B] \cap [2B \,,\, 2 \idty - 2A^*A ] \neq \emptyset.
\end{equation} 
But this is the case if and only if
\begin{equation}
\max \bigl( \{ \idty, 2B \}\bigr) \le 2\idty - 2A^*A \enskip\text{or}\enskip A^*A \le \min \bigl( \{\tfrac{1}{2}\, \idty, \idty - B \}\bigr).
\end{equation}
\end{proof}

Let us look at this compatibility condition for a simple case.
\begin{example}
\label{ex:compat}
If $\c H = \Cx$, we have
\[D = \begin{bmatrix} b & x \\ \overline{x} & b \end{bmatrix} \text{ and } C = \begin{bmatrix} a & a  \end{bmatrix} \,, \]
with $a,x \in \Cx$ and $b \in \Rl$. From the complete positivity of $\Lambda_{a,b}$, we know that $0 \leq b \leq 1-|a|^2$. The complete positivity conditions for $C$ and $D$ limit the choice for $x$. From $0 \leq D$ we see that
\[|x| \leq b \,.\]
From $D \leq \idty - C^* C $ on the other hand, we get that
\[ | x + |a|^2| \leq 1 - |a|^2 -b \,. \]
These two inequalities means that $x$ has to lie in the intersection of two circles in the complex plane, one centred at $0$ with radius $r_1=b$ and another one centred at $-|a|^2$ with radius $r_2=1-|a|^2 - b$. These two circles have an intersection when the distance between the centres is smaller than the sum of the radii. Hence, the channel is extendible if
\[ |a|^2 \leq \frac{1}{2} \]
which corresponds to the conditions in the lemma.
\end{example}

Given the constituents of the Markov construction, the channel $\Lambda_{C,D}$ and the invariant symbol $Q$, the symbol of the full process $Q_\infty$ can easily be determined
\begin{proposition}
\label{pro1}
The symbol $Q_\infty$ in~(\ref{qinf}) is an Hermitian block Toeplitz matrix with entries
\begin{equation}
\bigl( Q_\infty \bigr)_{i\,i} = Q \enskip\text{and}\enskip \bigl( Q_\infty \bigr)_{i\,i+n} = (A^*)^n (Q - B + X).
\end{equation}
Here $i=0,1,2,\ldots$ and $n=1,2,3,\ldots$
\end{proposition}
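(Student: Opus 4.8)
The plan is to unfold the recursion~(\ref{rn}), read off the block structure of each $R_n$ by induction, and then project with $P_n$ and pass to the limit. Throughout I would abbreviate $Y := Q - B + X$, so that the claimed off-diagonal entries are exactly $(A^*)^n Y$. The single algebraic identity driving everything is the invariance relation~(\ref{inv}), i.e.\ $A^*QA = Q - B$, which collapses $A^*QA + B$ to $Q$ and $A^*QA + X$ to $Y$ wherever they occur.

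First I would record how $M_n := C \oplus (\oplus^n \idty)$ acts on block matrices under the conjugation $R \mapsto M_n^* R\, M_n$. Because $C = \begin{bmatrix} A & A\end{bmatrix}$, this is a ``merge-and-shift'' operation: it takes an $(n+1)\times(n+1)$ block matrix indexed by $0,\dots,n$ to an $(n+2)\times(n+2)$ one indexed by $0,\dots,n+1$, duplicating the old block row/column $0$ into the new rows/columns $0$ and $1$ (each carrying a left factor $A^*$ and a right factor $A$), while the old rows/columns $1,\dots,n$ are shifted rigidly to $2,\dots,n+1$. Adding $D \oplus (\oplus^n 0)$ then modifies only the top-left $2\times 2$ corner, inserting $B$ on its diagonal and $X,X^*$ off-diagonal.

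With this dictionary the induction is mechanical. The inductive claim is that $R_n$ has every diagonal block equal to $Q$; first block-row $(R_n)_{0k} = (A^*)^{k-1}Y$ for $k \ge 1$ with first block-column its adjoint; and lower-right $n\times n$ corner block Toeplitz, $(R_n)_{jk} = (A^*)^{k-j}Y$ for $1 \le j < k$. The base cases $R_0 = Q$ and $R_1 = \begin{bmatrix} Q & Y \\ Y^* & Q\end{bmatrix}$ follow from~(\ref{rn}) after using~(\ref{inv}). For the inductive step, the merge-and-shift description shows that the distinguished row $0$ of $R_n$ is carried to rows $0$ and $1$ of $R_{n+1}$ after left multiplication by $A^*$, which advances the power of $A^*$ by one and reproduces the geometric row $Q, Y, A^*Y, \dots$; the Toeplitz corner is transported rigidly; and the added $D$ supplies, via~(\ref{inv}), precisely the $B$ and $X$ needed to turn the new top-left corner back into $\begin{bmatrix} Q & Y \\ Y^* & Q\end{bmatrix}$. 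The only blocks requiring care are those coupling the old row/column $0$ to the shifted interior, which yield sub- and super-diagonal entries $Y^*A^{p-1}$ and $(A^*)^{m-1}Y$; one checks these are consistent with the adjoint-symmetric Toeplitz pattern.

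Finally, $P_n$ deletes block row/column $0$, so $P_nR_nP_n^*$ is exactly the lower-right Toeplitz corner: an $n\times n$ block matrix with diagonal $Q$ and $(i,i+m)$ entry $(A^*)^m Y$. These entries are independent of $n$ as soon as $n > i+m$, so in the weak limit each fixed pair of modes stabilizes and $Q_\infty$ is the infinite block Toeplitz matrix with $(Q_\infty)_{ii} = Q$ and $(Q_\infty)_{i\,i+n} = (A^*)^n(Q-B+X)$, Hermiticity being visible from $(Q_\infty)_{i+n\,i} = Y^*A^n = (Q_\infty)_{i\,i+n}^*$. I expect the main obstacle to be purely notational: keeping the merge-and-shift reindexing straight so that the non-Toeplitz first row of $R_n$, which $P_n$ discards, is never confused with the genuinely Toeplitz interior. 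The bound $\norm{A}<1$ from Lemma~\ref{lem2} is what keeps the limiting entries $(A^*)^nY$ decaying and bounded.
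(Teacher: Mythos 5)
Your proposal is correct and follows essentially the same route as the paper, whose proof is exactly the ``straightforward computation of the consecutive $R_n$ combined with the invariance equation~(\ref{inv})'' that you carry out; your induction (merge-and-shift of block row/column $0$, rigid transport of the Toeplitz interior, and $A^*QA+B=Q$, $A^*QA+X=Q-B+X$ restoring the corner) is a fully detailed version of that one-line argument.
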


\begin{proof}
The proof consists in a straightforward computation of the consecutive $R_n$ in~(\ref{rn}) combined with the invariance equation~(\ref{inv}).
\end{proof}

\section{Fermionic Entropy density}
\label{sec:density}

In this section we compute the entropy $h$ for the Fermionic Markov process constructed in Section~\ref{sec:fermionic}. We can associate an entropy to a Fermionic Markov chain using~(\ref{qfent})
\begin{align}
\s h(X,Q) 
&:= \lim_{n \to \infty} \frac{1}{n}\, \s S(P_n R_n P_n^*) \\
&:= \lim_{n \to \infty} \frac{1}{n}\, \Bigl( - \tr P_n R_n P_n^* \log(P_n R_n P_n^*) \nonumber \\
&\phantom{:= \lim_{n \to \infty} \frac{1}{n}\, \Bigl(\ }- \tr (\idty - P_n R_n P_n^*) \log(\idty - P_n R_n P_n^*) \Bigr).
\label{qfme}
\end{align}

A first method to compute this relies directly on the expression~(\ref{qfent}) for the entropy of a free state in terms of its symbol and on the structure of the symbols $Q_\infty$ in Proposition~\ref{pro1}. A second way is to rewrite the entropy as the asymptotic rate of disorder, as in the classical case, see Section~\ref{sec:class}. This last approach was used in~\cite{Blackwell57,Fannes92} to compute the entropy of a hidden Markov process. The first method uses the full local restrictions of the state while the second relies on the incremental structure of the local states given by a transfer matrix like construction, see~(\ref{fcs}) and~(\ref{rn}). 

\subsection{Direct approach}
The first approach to calculating the entropy density uses an extension of Szeg\"o's theorem to block Toeplitz matrices $\hat T$. This theorem allows to calculate asymptotic densities of trace functions of Toeplitz matrices. A block Toeplitz matrix is a block matrix $\hat T$ where the blocks along diagonals are equal
\[ {\hat T}_{i,j} = {\hat T}_{i+k,j+k} \,, \]
where ${\hat T}_{i,j}$ denotes a block elements. Using Szeg\"o's theorem, we can write densities
\[ \lim_{n \rightarrow \infty} \tr \frac{f(\hat{T}_n)}{n} \]
of a matrix function $f$ and the finite projections ${\hat T}_n = P_n {\hat T} P_n$ in terms of a generating function $T(\theta)$. The Fourier coefficients of this $T(\theta)$ are the elements on the diagonals of $\hat T$. We will now formulate this more precisely.

Let $T: [-\pi,\pi[ \to \c M_d$ be an essentially bounded measurable matrix-valued function on the circle and denote its Fourier coefficients by
\begin{equation*}
\hat T(k) := \frac{1}{2\pi}\, \int_{-\pi}^\pi \!d\theta\, T(\theta)\, \r e^{-ik\theta} \in \c M_d.
\end{equation*}
A function $T$ is essentially bounded if there exist a constant $M$ such that $|T(\theta)| \leq M$ almost everywhere. 
The operator
\begin{equation*}
\hat T = \begin{pmatrix}
\hat T(0)     &\hat T(1)    &\hat T(2) &\ldots \\
\hat T(-1)  &\hat T(0)    &\hat T(1) &\ldots \\
\hat T(-2)  &\hat T(-1) &\hat T(0) &\ldots \\
\vdots  & \vdots & \vdots &\ddots    \\
\end{pmatrix} 
\end{equation*}
defined on $\ell^0(\Nl) \otimes \Cx^d$ extends to a bounded linear transformation of $\ell^2(\Nl) \otimes \Cx^d$. Operators of this type are block Toeplitz matrices and one has
\[
\norm{\hat T} = \norm T_\infty = \underset{\theta}{\text{ess\,sup}}\, \norm{T(\theta)} \,,
\]
where the essential supremum of $T$ is the infimum of all constants $M$ that bound $|T(\theta)|$ almost everywhere.

The Toeplitz matrices we are interested in are symbols and hence self-adjoint. For such Toeplitz matrices, we have that ${\hat T}^* = \hat T$ if and only if the function $T$ takes values in the Hermitian matrices.

\subsubsection{Szeg\"o's theorem}
An extension of Szeg\"o's theorem to block Toeplitz matrices characterizes the limiting spectrum of principal submatrices $P_n \hat T\, P_n$ in terms of the generating function $T$, see~\cite{Miranda00}. Here $P_n$ projects on the first $n$ blocks in $\ell^2(\Nl) \otimes \Cx^d$. We obtain here a more general characterization of such limiting submatrices.

Let us denote for a simply connected compact subset $\c K$ of $\Cx$ by $\c H(\c K)$ the set of continuous functions $f: \c K \to \Cx$ that are holomorphic in the interior $\overset{\circ}{\c K}$ of $\c K$.
Mergelyan's theorem~\cite{Rudin87} asserts that the complex polynomials in the indeterminate $z$ are dense in $\c H(\c K)$: for any $f \in \c H(\c K)$ and $\epsilon > 0$ there exists a polynomial $p^\epsilon$ such that
\begin{equation*}
\max_{z \in \c K} \bigl| f(z) - p^\epsilon(z) \bigr| \le \epsilon.
\end{equation*}
Finally, let us denote by $\Ex_n$ the conditional expectation from $\c B \bigl( \ell^2(\Nl) \bigr) \otimes \c M_d \to \c M_d$ which traces out the first $n$ blocks
\begin{equation*}
\Ex_n(X) := \frac{1}{n}\, \sum_{j=0}^{n-1} X_{jj} \in \c M_d.
\end{equation*} 

We get the following generalization of Szeg\"o's theorem~\cite{Grenander01}.
\begin{thm}
\label{thm:szego}
Let $\{T_1, T_2, \ldots, T_k \} \subset \c L^\infty \bigl(  [-\pi, \pi[, \c M_d \bigr)$ be such that every $T_j(\theta)$ is $\theta$-a.e.\ diagonalizable, let $f_j \in \c H \bigl( \{ z \in \Cx \mid \abs z \le \norm{T_j}_\infty \} \bigr)$ for $j = 1, 2, \ldots, k$ and let $A_j \in \c M_d$, $j = 1, 2, \ldots, k+1$, then
\begin{equation}
\begin{split}
&\lim_{n \to \infty} \Ex_n \Bigl( (\idty \otimes A_1)\, f_1 \bigl( P_n \hat T_1 P_n \bigr)\, (\idty \otimes A_2)\, \cdots f_k \bigl( P_n \hat T_k P_n \bigr)\, (\idty \otimes A_{k+1}) \Bigr) \\
&\quad= \frac{1}{2\pi}\, \int_{-\pi}^\pi \!d\theta\, A_1\, f_1(T_1(\theta))\, A_2 \cdots f_k(T_k(\theta))\, A_{k+1}.
\end{split}
\label{thm:szego:1}
\end{equation}
\end{thm}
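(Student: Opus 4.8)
The plan is to read off the right-hand side as a single Fourier coefficient, reduce the compressed operators on the left to compressions of a bi-infinite multiplication operator, and then obtain the identity from a boundary (corner) estimate together with two successive density arguments. First I set $M(\theta) := A_1\, f_1(T_1(\theta))\, A_2 \cdots f_k(T_k(\theta))\, A_{k+1}$, which is well defined $\theta$-a.e.\ because each $T_j(\theta)$ is diagonalizable and its spectrum lies in $\c K_j := \{z : \abs z \le \norm{T_j}_\infty\}$, the domain of $f_j$; the right-hand side of~(\ref{thm:szego:1}) is then exactly the zeroth Fourier coefficient $\hat M(0)$. Under the Fourier isomorphism $\ell^2(\Ir)\otimes\Cx^d \cong \c L^2([-\pi,\pi[,\Cx^d)$ I let $L_j$ denote multiplication by $T_j(\theta)$; it is bounded, $\hat T_j$ is its compression to the coordinates $\Nl$, and since $\ran P_n \subseteq \ran P_+$ one has $P_n\hat T_j P_n = P_n L_j P_n$ with $P_n$ now the projection onto blocks $0,\ldots,n-1$ of $\ell^2(\Ir)\otimes\Cx^d$. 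Likewise $(\idty\otimes A_j)$ is multiplication by the constant symbol $A_j$, so the full uncompressed product $W := (\idty\otimes A_1) f_1(L_1)\cdots (\idty\otimes A_{k+1})$ is multiplication by $M(\theta)$; in particular all of its diagonal blocks equal $\hat M(0)$, whence $\Ex_n(W)=\hat M(0)$ for every $n$.

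Next I treat the base case in which every $f_j$ is a polynomial and every $T_j$ a trigonometric polynomial, so that all $L_j$, and hence $W$, are band operators of a fixed width independent of $n$. Expanding each $p_j(P_n L_j P_n)$ by inserting $\idty = P_n + Q_n$ with $Q_n := \idty - P_n$, the compressed product differs from $P_n W P_n$ only by a sum of terms each containing at least one factor $Q_n$ sandwiched between band operators. A factor of the form (band)$\,Q_n$ or $Q_n\,$(band) is supported on the $O(1)$ matrix entries straddling the boundary index $n$, so each correction term is supported within a fixed number of blocks of that boundary and is uniformly bounded in norm; only $O(1)$ of its diagonal blocks are therefore non-zero, and $\Ex_n$ of it is $O(1/n)$. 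Since the diagonal blocks of $P_n W P_n$ all equal $\hat M(0)$, I conclude that $\Ex_n$ of the left-hand operand is $\hat M(0)+O(1/n)\to\hat M(0)$, which is the asserted identity in this case.

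It remains to remove the two restrictions. The band restriction on the symbols goes first: approximating each $T_j$ by trigonometric polynomials in $\norm{\cdot}_\infty$, the operator-norm identity $\norm{\hat T}=\norm T_\infty$ gives norm convergence of the $L_j$ together with the uniform bound $\norm{P_n L_j P_n}\le\norm{T_j}_\infty$, and since polynomials are Lipschitz on the disc $\c K_j$ the compressed products converge uniformly in $n$, so both sides pass to the limit. The polynomial restriction on the $f_j$ goes last, via Mergelyan's theorem, approximating each $f_j$ uniformly on $\c K_j$ by polynomials and then replacing the $f_j$ by these polynomials one at a time inside the word. This telescoping step is where the diagonalizability hypothesis enters and where the main difficulty lies: for a non-normal compression $P_n L_j P_n$ one cannot bound $\norm{f_j(P_n L_j P_n) - p_j(P_n L_j P_n)}$ by $\max_{\c K_j}\abs{f_j-p_j}$ uniformly in $n$. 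In the self-adjoint case relevant to this paper the $T_j(\theta)$ are symbols, hence Hermitian, so each $P_n L_j P_n$ is itself Hermitian, the spectral theorem yields $\norm{g(P_n L_j P_n)}\le\max_{\c K_j}\abs g$ uniformly in $n$, and the telescoping replacement inside the uniformly bounded word closes the argument at once. For a general diagonalizable symbol the same scheme works once this functional-calculus error is controlled uniformly in $n$ by the essential supremum of the condition numbers of the fibrewise diagonalizing similarities of $T_j$; securing that uniform bound is the crux of the extension beyond the normal setting.
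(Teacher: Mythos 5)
Your bi-infinite (Laurent) reduction and the corner estimate are correct and, for the band case, arguably cleaner than what the paper does: the paper, after replacing each $f_j$ by a polynomial, proves the monomial case by expanding traces in Fourier coefficients, substituting difference variables and counting admissible indices, whereas you obtain it structurally from $\Ex_n(P_n W P_n) = \hat M(0)$ plus an $O(1/n)$ boundary correction. The genuine gap is in your removal of the band restriction: trigonometric polynomials are \emph{not} dense in $\c L^\infty\bigl([-\pi,\pi[,\c M_d\bigr)$ for the essential supremum norm. Their uniform closure is the space of \emph{continuous} periodic symbols; an indicator-function symbol, for instance, is at $\c L^\infty$-distance $\tfrac12$ from every continuous function. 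Since the theorem assumes only $T_j \in \c L^\infty$, and since your limiting step leans on the identity $\norm{\hat T} = \norm T_\infty$, i.e.\ on uniform approximation of the symbol, your argument as written proves the theorem only for continuous $T_j$. To reach $\c L^\infty$ you must either avoid band approximation entirely --- this is what the paper's counting argument does, since it works directly with the Fourier coefficients of the given $T_j$ --- or replace uniform approximation by $\c L^2$-approximation combined with normalized Hilbert--Schmidt estimates, using
\begin{equation*}
\tfrac1n \bigl\Vert P_n (\hat T_j - \hat T_j^{(m)}) P_n \bigr\Vert_{\r{HS}}^2 \le \frac{1}{2\pi}\int_{-\pi}^{\pi} \!d\theta\, \bigl\Vert T_j(\theta) - T_j^{(m)}(\theta) \bigr\Vert_{\r{HS}}^2
\end{equation*}
together with Schatten--H\"older bounds to replace one factor of the word at a time; that is a genuinely different, and more delicate, continuity argument than the one you wrote.

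By contrast, the difficulty you single out at the end as ``the crux'' is not actually one. Von~Neumann's inequality --- the tool the paper invokes --- holds for \emph{arbitrary} contractions, not only normal ones: for every polynomial $q$ and every $n$, $\norm{q(P_n \hat T_j P_n)} \le \max_{\abs z \le \norm{T_j}_\infty} \abs{q(z)}$. Taking polynomials $q_m \to f_j$ uniformly on the disc (Mergelyan), the sequence $q_m(P_n \hat T_j P_n)$ is norm-Cauchy by this same inequality, uniformly in $n$; defining $f_j$ of the compression as its limit, the telescoping replacement of each $f_j$ by $p^\epsilon_j$ is uniform in $n$ with no condition numbers entering at all. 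The a.e.\ diagonalizability of $T_j(\theta)$ is needed only to make sense of $f_j(T_j(\theta))$ on the right-hand side. So that part of your plan closes both in the Hermitian and in the general case; the unrepaired defect in your proposal is the false density claim above.
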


\begin{proof}
The proof relies on a continuity argument combined with a standard counting argument. First remark that given $\epsilon > 0$ every $f_j$ can be approximated by a suitable complex polynomial $p^\epsilon_j$
\begin{equation*}
\max_{\abs z \le \norm{T_j}_\infty} \bigl| f_j(z) - p^\epsilon_j(z) \bigr| \le \epsilon.
\end{equation*}
Next, as $\norm{T_j(\theta)} \le \norm{T_j}_\infty$ a.e., we can use von~Neumann's inequality~\cite{Halmos74} to get
\begin{align}
&\norm{f_j(T_j(\theta))} \le \max_{\abs z \le \norm{T_j}_\infty} \abs{f_j(z)} \enskip\text{and} \\
&\bigl\Vert f_j(T_j(\theta)) - p^\epsilon_j(T_j(\theta)) \bigr\Vert = \bigl\Vert \bigl( f_j - p^\epsilon_j \bigr)(T_j(\theta)) \bigr\Vert 
\nonumber \\
&\quad \le \max_{\abs z \le \norm{T_j(\theta)}} \bigl( f_j - p^\epsilon_j \bigr)(z) \le \max_{\abs z \le \norm{T_j}_\infty} \bigl( f_j - p^\epsilon_j \bigr)(z) \le \epsilon.
\end{align}
These estimates allow to replace the $f_j$ in~(\ref{thm:szego:1}) by polynomials. It then remains to verify the statement for monomials, but this reduces to a standard counting argument.

In the case where there is only one function $f(X)=X^k$ and $A_j = \idty$, the density limit can be worked out as follows:
\begin{align}
\lim_{n \rightarrow \infty} \frac{1}{n} \tr (P_n  \hat T P_n)^k &= \lim_{n \rightarrow \infty} \frac{1}{n} \sum_{i_1,\ldots, i_k =0}^n \hat T_{i_1, i_2} \hat T_{i_2, i_3} \ldots \hat T_{i_k, i_1} \\
&= \lim_{n \rightarrow \infty} \frac{1}{n} \sum_{i_1,\ldots, i_k =0}^n \hat T( i_2 - i_1 ) \hat T( i_3 - i_2 ) \ldots \hat T( i_k - i_1 )
\end{align}
By substituting $v_1 = i_2 - i_1 \,, \ldots \,, v_{k-1} = i_k - i_{k-1}$, this sum becomes:
\[
\lim_{n \rightarrow \infty} \frac{1}{n} \sum_{v_1, \ldots, v_{k-1} = -n}^n \,\, \sum_{i_1 \in \mathcal{S}_n(v_1,\ldots, v_{k-1})} \hat T(v_1) \ldots \hat T(v_{k-1}) \hat T(-v_1 - \ldots - v_{k-1}) \,,
\]
where $\mathcal{S}_n(v_1,\ldots, v_{k-1})$ is the set of indices $i$ such that $v_1+i, v_1+v_2+i, \ldots, v_1 + \ldots v_{k-1} + i \in [0,n]$. The number of elements in this set increases by exactly one when $n$ goes to $n+1$, so in the limit we get
\[ \lim_{n \rightarrow \infty} \sum_{v_1, \ldots, v_{k-1} = -\infty}^\infty \hat T(v_1) \ldots \hat T(v_{k-1}) \hat T(-v_1 - \ldots - v_{k-1}) \,. \]
This is exactly the zeroth Fourier coefficient of $T(\theta)^k$, so we get that the density equals
\[ \frac{1}{2 \pi} \int_{-pi}^\pi d \theta f(T(\theta)) \,. \]
The general case of the theorem can be worked out in a similar manner.
\end{proof}

To deal with entropy we don't need the full amalgamated extension of Theorem~\ref{thm:szego} of Szeg\"o's theorem but we may restrict ourselves for an Hermitian $T$ to the asymptotic eigenvalue distribution of the principal blocks $P_n \hat T P_n$. Taking the trace of~(\ref{thm:szego:1}) with a single $f$ and all $A_j = \idty$ we recover the result~\cite{Miranda00}. We denote by $\inf(T)$ and $\sup(T)$ the largest and smallest real numbers such that
\begin{equation}
\inf(T) \le T \le \sup(T)\enskip \text{a.e.}
\end{equation}
The increasingly ordered eigenvalues $\bigl( \tau_1(\theta), \tau_2(\theta), \ldots, \tau_d(\theta) \bigr)$ of $T(\theta)$ are measurable functions of $\theta$ that satisfy
\begin{equation}
\inf(T) \le \tau_1(\theta) \le \cdots \le \tau_d(\theta) \le \sup(T).
\end{equation}
The eigenvalue distribution of $P_n \hat T P_n$ is the atomic probability measure
\begin{equation}
\delta_n = \frac{1}{nd}\, \sum_{\lambda \in \sigma(P_n\hat T P_n)} \delta_\lambda.
\end{equation}

\begin{thm}
\label{thm1}
With the assumptions of above
\begin{equation} 
\underset{n \to \infty}{\r{w^*\!\text{-}\,lim}}\, \delta_n = \delta_\infty,
\end{equation} 
where
\begin{equation}
\delta_\infty \bigl( ]-\infty, t] \bigr) = \frac{1}{d}\, \sum_{k=1}^d \frac{1}{2\pi}\, \int_{\tau_k(\theta) \le t} \!d\theta.
\label{limdis}
\end{equation}
\end{thm}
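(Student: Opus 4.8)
The plan is to derive Theorem~\ref{thm1} from the amalgamated Szegő result of Theorem~\ref{thm:szego} by the method of moments. Since $T$ is Hermitian, each $T(\theta)$ is $\theta$-a.e.\ diagonalizable with real eigenvalues, so the hypotheses of Theorem~\ref{thm:szego} are in force. The first point I would settle is that all relevant measures live on one fixed compact interval. The block Toeplitz operator satisfies $\inf(T)\,\idty \le \hat T \le \sup(T)\,\idty$ on $\ell^2(\Nl) \otimes \Cx^d$, because its quadratic form is $\frac{1}{2\pi}\int_{-\pi}^\pi \< \hat\psi(\theta)\,,\, T(\theta)\,\hat\psi(\theta)\>\, d\theta$, which is pinched a.e.\ between these bounds. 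Each compression $P_n \hat T P_n$ then inherits the same operator inequalities, so $\sigma(P_n \hat T P_n) \subset [\inf(T), \sup(T)]$; since every $\tau_k(\theta)$ also lies in this interval, all $\delta_n$ and $\delta_\infty$ are probability measures supported on $[\inf(T), \sup(T)]$.

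With a common compact support in hand, weak-* convergence coincides with weak convergence, and by the Weierstrass approximation theorem it suffices to prove $\int z^k\, d\delta_n \to \int z^k\, d\delta_\infty$ for every $k \in \Nl$. The bridge to Theorem~\ref{thm:szego} is the elementary identity
\[
\int z^k\, d\delta_n = \frac{1}{nd} \sum_{\lambda \in \sigma(P_n \hat T P_n)} \lambda^k = \frac{1}{d}\, \tr \Ex_n\bigl( (P_n \hat T P_n)^k \bigr),
\]
where I used $\tr \Ex_n(X) = \tfrac{1}{n}\tr X$ together with $f(0)=0$ for $f(z)=z^k$, $k\ge 1$ (so that the relevant operator is finite rank and the partial trace equals the full one). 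Applying Theorem~\ref{thm:szego} with the single function $f(z)=z^k$ (a polynomial, hence in $\c H(\c K)$ for every compact $\c K$), all $A_j = \idty$, and then taking the trace over $\c M_d$ of both sides gives
\[
\lim_{n\to\infty} \int z^k\, d\delta_n = \frac{1}{d}\cdot \frac{1}{2\pi} \int_{-\pi}^\pi \!d\theta\, \tr\bigl( T(\theta)^k \bigr) = \frac{1}{d} \sum_{j=1}^d \frac{1}{2\pi} \int_{-\pi}^\pi \!d\theta\, \tau_j(\theta)^k,
\]
using $\tr(T(\theta)^k) = \sum_{j=1}^d \tau_j(\theta)^k$.

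It then remains to recognize this limit as the $k$-th moment of the measure determined by the distribution function~(\ref{limdis}). This is a short change-of-variables computation: $\delta_\infty$ is precisely the image, under the map $(\theta,k) \mapsto \tau_k(\theta)$, of the product of normalized Lebesgue measure on $[-\pi,\pi[$ with the uniform measure on $\{1,\ldots,d\}$, whose $k$-th moment is exactly the right-hand side above. Since $\delta_\infty$ is compactly supported its Hausdorff moment problem is determinate, so convergence of all moments to those of $\delta_\infty$ upgrades to weak-* convergence, which is the assertion. I expect no conceptual obstacle here: the genuine analytic content is already packaged in Theorem~\ref{thm:szego}, and the only points that truly require care are the operator pinching $\inf(T)\,\idty \le \hat T \le \sup(T)\,\idty$ (guaranteeing a common compact support and thereby legitimizing the reduction to polynomial test functions) and the determinacy of the moment problem that converts convergence of moments into weak-* convergence.
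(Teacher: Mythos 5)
Your proposal is correct and follows essentially the same route as the paper: the paper obtains Theorem~\ref{thm1} precisely by taking the trace of~(\ref{thm:szego:1}) with a single $f$ and all $A_j = \idty$, which is your moment computation specialized to $f(z)=z^k$. You merely make explicit the details the paper leaves implicit --- the pinching $\inf(T)\,\idty \le P_n \hat T P_n \le \sup(T)\,\idty$ giving a common compact support, the identification of the limit with the moments of the pushforward measure $\delta_\infty$, and the Weierstrass/moment-determinacy step upgrading polynomial test functions to weak-$*$ convergence --- all of which are sound.
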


An equivalent way to express this result is saying that for any continuous complex function $f$ on $[\inf(T), \sup(T)]$
\begin{equation}
\lim_{n \to \infty} \frac{1}{nd}\, \tr f(P_n \hat T P_n) = \frac{1}{2\pi} \int_{-\pi}^\pi \!d\theta\, \frac{1}{d}\, \tr f(T(\theta)). 
\label{szego2}
\end{equation}
This version is in some sense more natural as it doesn't involve the reordering of the eigenvalue functions $\tau_k$ used in the definition of the distribution function of the limiting eigenvalue distribution~(\ref{limdis}).

We can apply Theorem~\ref{thm1} to the computation of the entropy, replacing the Toeplitz operator $T$ by $Q_\infty$ in Proposition~\ref{pro1} and choosing in~(\ref{szego2})
\begin{equation}
f(\lambda) = -\lambda \log(\lambda) - (1-\lambda)\log(1-\lambda)\enskip \text{on } (0,1).
\end{equation}
The generating function $T$ becomes
\begin{align*}
\theta \mapsto Q + &(Q -B +X) A \r e^{\imath \theta}(\idty - A \r e^{\imath \theta})^{-1} \\ &+ A^* \r e^{-\imath \theta}(\idty - A^* \r e^{-\imath \theta})^{-1} (Q -B +X^*).
\end{align*}

\begin{example}
In case of a single particle space $\mathcal{H}=\mathbb{C}$ as in the Example~\ref{ex:compat}, the entropy can be calculated from the scalar version of the above function
\[ \theta \mapsto q + (q -b +x) a \r e^{\imath \theta}(1 - a \r e^{\imath \theta})^{-1} \\ + \text{h.c.} \, \]
where $x$ lies within the two circles determining the compatibility condition, as explained in Example~\ref{ex:compat}.

This scalar function is linear in $x$ and the function $f$ is concave in it's argument. Hence the minimal entropy is obtained on the border of the compatibility region, much like in the case of the classical Markov process and compatible hidden Markov processes described at the end of Section~\ref{sec:class}.
\end{example}

\subsection{Entropy rate approach}
The second approach expresses the entropy as an asymptotic rate. Let $\omega$ be a translation invariant state on a quantum spin chain $\otimes^\Ir \c M_d$ and denote by $\rho_{(0,n-1)}$ its reduced density matrices, i.e.\
\begin{equation}
\omega(X) = \tr \bigl( \rho_{(0,n-1)} X \bigr)\enskip \text{for } X \in \otimes_{k=0}^{n-1} \c M_d.
\end{equation}
As we have seen before, subadditivity combined with translation invariance guarantee the existence of the mean entropy of $\omega$ for intervals
\begin{equation}
\s s(\omega) = \lim_{n \to \infty} \frac{1}{n}\, \s S(\rho_{(0,n-1)}).
\end{equation}
Moreover, strong subadditivity in conjunction with translation invariance also guarantees that 
\begin{align}
&n \mapsto \s S(\rho_{(0,n-1)})\enskip \text{is monotonically increasing and} 
\label{ent:mon} \\
&\s s(\omega) = \lim_{n \to \infty} \frac{1}{n}\, \s S(\rho_{(0,n-1)}) = \lim_{n \to \infty} \Bigl( \s S(\rho_{(0,n)}) - \s S(\rho_{(0,n-1)}) \Bigr).
\label{ent:hop}
\end{align}
Both properties~(\ref{ent:mon}) and~(\ref{ent:hop}) fail for general quantum states or for general finite local regions~\cite{Kay01}. These results for quantum spin chains extend to Fermionic lattices using the natural embeddings~(\ref{car:emb}) and restricting to even states~\cite{Araki03}. The equality of both limits in~(\ref{ent:hop}) can be seen as a discrete version of de~l'H\^opital's rule. Obviously, the existence of the limit of the differences is a much stronger requirement than that of the averages. 

For free Fermionic states we can work at the level of symbols. E.g., strong subadditivity of entropy amounts to 
\begin{equation}
\s S(Q_{123}) + \s S(Q_2) \le \s S(Q_{12}) + \s S(Q_{23})
\end{equation}  
where $\s S$ is defined in~(\ref{qfent}) and where the symbols in the inequality are as follows
\begin{equation}
 Q_{123} = \begin{bmatrix} Q_1 &T &S \\ T^* &Q_2 &R \\ S^* &R^* &Q_3 \end{bmatrix},\enskip Q_{12} = \begin{bmatrix} Q_1 &T \\ T^* &Q_2 \end{bmatrix}, \enskip\text{and}\enskip Q_{23} = \begin{bmatrix} Q_2 &R \\ R^* &Q_3 \end{bmatrix}.
\end{equation}
For more on functions that satisfy such strong subadditivity, see \cite{audenaert_strongly_2010}.

Below, we extend the equality of the limit of differences with that of averages, as in~(\ref{ent:hop}), to a much wider class of functions than the strongly subadditive ones, like the entropy of a symbol~(\ref{qfent}). The argument relies on regularity of the functions and not on subadditivity or convexity which rarely hold. Szeg\"o's theorem follows as a consequence. 

We first show that the theorem holds for polynomials.
\begin{lemma}
\label{lem3}
With the notation and assumptions on an Hermitian Toeplitz operator at the beginning of this section, for any polynomial $p$ 
\begin{equation} 
\lim_{n \to \infty} \Bigl( \tr p(P_{n} \hat T P_{n}) - \tr p(P_{n-1} \hat T P_{n-1}) \Bigr) = \frac{1}{2\pi} \int_{-\pi}^\pi \!d\theta\, \tr p(T(\theta)).
\end{equation} 
\end{lemma}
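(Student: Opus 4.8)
The plan is to reduce to monomials and then re-use the combinatorial expansion already carried out in the proof of Theorem~\ref{thm:szego}, reading it one order finer. Since both sides of the asserted identity are linear in $p$, it suffices to treat $p(x)=x^k$. Writing $a_n := \tr(P_n\hat T P_n)^k$, I would expand the trace over paths and pass to the difference variables $v_j = i_{j+1}-i_j$ exactly as there, obtaining
\[
a_n = \sum_{v\in\Ir^{k-1}} \abs{\c S_n(v)}\, M(v),\qquad M(v) := \tr\bigl[\hat T(v_1)\cdots\hat T(v_{k-1})\,\hat T(-v_1-\cdots-v_{k-1})\bigr],
\]
where $\c S_n(v)$ is the set of admissible base points. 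Setting $W(v)$ equal to the width $\max_j s_j-\min_j s_j$ of the partial sums $s_0=0,\,s_1=v_1,\,\ldots$, one has the exact count $\abs{\c S_n(v)}=\max\bigl(0,\,n-W(v)\bigr)$.

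The key point is that this count grows by exactly one per step: $\abs{\c S_n(v)}-\abs{\c S_{n-1}(v)}$ equals $1$ when $W(v)\le n-1$ and $0$ otherwise. Subtracting the expansions at levels $n$ and $n-1$ therefore collapses the width-weighting and leaves the clean partial sum
\[
a_n-a_{n-1}=\sum_{v:\,W(v)\le n-1} M(v).
\]
As $n\to\infty$ the constraint $W(v)\le n-1$ is eventually met by every fixed $v$, so formally the right-hand side tends to $\sum_{v\in\Ir^{k-1}}M(v)$. This last sum is precisely the trace of the zeroth Fourier coefficient of $T^k$: regrouping $v_1,\ldots,v_{k-1}$ together with $v_k:=-(v_1+\cdots+v_{k-1})$ exhibits it as $\tr\,\widehat{(T^k)}(0)$, which by definition equals $\frac{1}{2\pi}\int_{-\pi}^\pi\tr T(\theta)^k\,d\theta$, i.e.\ the claimed limit.

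The step that needs real care — and which I expect to be the main obstacle — is justifying the passage to the limit inside the $v$-sum, i.e.\ showing that the width-tail $\sum_{W(v)>n-1}M(v)$ vanishes. The crude bound $\abs{M(v)}\le\norm{T}_\infty^{\,k}$ is far too weak, and for $k\ge3$ the multi-index sum of $\abs{M(v)}$ need not converge absolutely from $T\in\c L^\infty$ alone, since Parseval only places the Fourier coefficients $\hat T(v)$ in $\ell^2$ (Hilbert--Schmidt), not $\ell^1$. For $k=2$ a single Cauchy--Schwarz estimate in Hilbert--Schmidt norm handles the tail outright. For higher $k$ I would fall back on the Ces\`aro identification: the averaged statement already proven gives $a_n/n\to L:=\frac{1}{2\pi}\int\tr T(\theta)^k\,d\theta$, and since $a_n=\sum_{m\le n}(a_m-a_{m-1})$ the increments are Ces\`aro-summable to $L$, so it remains only to prove that $a_n-a_{n-1}$ converges at all, after which its value is forced to be $L$. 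Establishing that bare convergence is the genuine crux, resolved by controlling the tail through a Hilbert--Schmidt grouping of adjacent factors $\hat T(v_j)$ — and, where $\ell^2$-summability is not by itself enough, by exploiting the extra regularity of the generating function, which in the entropy application is real-analytic in $\theta$. This reliance on regularity of $T$, rather than on any subadditivity or convexity, is exactly the feature emphasised just before the lemma.
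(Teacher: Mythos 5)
Your proposal is, at its core, the same argument as the paper's own proof: linearity reduces the claim to $p(x)=x^k$, the trace is expanded over cyclic index paths, the difference variables $v_j=i_{j+1}-i_j$ are introduced, and the increment collapses because the count $\abs{\mathcal S_n(v)}$ of admissible base points grows by one per step, leaving the zeroth Fourier coefficient of $T^k$ as the limit. Two points of comparison are worth recording. First, your bookkeeping is sharper than the paper's: the paper asserts that for every fixed $v$ the count ``increases by exactly one'' and hence that the bracketed difference ``equals one'', whereas the correct statement is the one you give --- the difference is $1$ only once $n$ exceeds the width $W(v)$, and $0$ before that --- so that the increment $a_n-a_{n-1}$ (with $a_n=\tr\,(P_n\hat T P_n)^k$) is the width-constrained partial sum $\sum_{W(v)\le n-1}M(v)$, not yet the full sum. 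Second, the issue you isolate as the crux --- that this partial sum need not converge, let alone absolutely for $k\ge3$, to the convolution sum when all one knows is $T\in\mathcal L^\infty$, i.e.\ $\hat T\in\ell^2$ --- is a genuine analytic gap, but it is equally a gap in the paper's proof, which interchanges the limit with the sum over $v$ without comment. Your Ces\`aro reduction is a genuinely new ingredient relative to the paper: since Theorem~\ref{thm:szego} is proved earlier and independently, it already gives $a_n/n\to L$ with $L=\frac{1}{2\pi}\int_{-\pi}^{\pi}\!d\theta\,\tr T(\theta)^k$, so any limit of the increments is forced to equal $L$ and only bare convergence of $a_n-a_{n-1}$ remains to be established; this is a legitimate, non-circular way to localise the remaining work, even though your sketch of that last step (Hilbert--Schmidt grouping of adjacent factors, extra regularity of the symbol) is not carried out in detail. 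In short: you reproduce the paper's combinatorial argument, state its counting core more precisely than the paper does, and flag --- without fully closing --- an interchange-of-limits step that the paper passes over in silence.
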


\begin{proof}
It suffices to consider $p(\lambda) = \lambda^k$ for $k \in \Nl$. We have
\begin{align*}
&\lim_{n \to \infty} \tr (P_n \hat T P_n)^k - \tr (P_{n-1} \hat T P_{n-1})^k \\
&\quad= \lim_{n \to \infty} \sum_{i=1}^n \tr \bigl( (P_n \hat T P_n)^k \bigr)_{ii} - \sum_{i=1}^{n-1} \tr \bigl( (P_{n-1} \hat T P_{n-1})^k) \bigr)_{ii} \\
&\quad= \lim_{n \to \infty} \Bigl( \sum_{i_1,\ldots,i_k=1}^n - \sum_{i_1,\ldots,i_k=1}^{n-1} \Bigr) \\
&\qquad \qquad \qquad \tr \Bigl\{ (P_n \hat T P_n)_{i_1i_2}  \cdots (P_n \hat T P_n)_{i_{k-1}i_k} (P_n \hat T P_n)_{i_ki_1} \Bigr\} \\
&\quad= \lim_{n \to \infty} \Bigl( \sum_{i_1,\ldots,i_k=1}^n - \sum_{i_1,\ldots,i_k=1}^{n-1} \Bigr) \hat T (i_2 - i_1) \ldots \hat T (i_1 - i_k) \,,
\end{align*}
where $(Q)_{ij}$ denotes the block at position $(i,j)$ inside of a block matrix $Q$.

By substituting $v_1 = i_2 - i_1 \,, \ldots \,, v_{k-1} = i_k - i_{k-1}$, this sum becomes:
\begin{align*}
&\lim_{n \rightarrow \infty} \sum_{v_1, \ldots, v_{k-1} = -n}^n \,\, \Bigl(\sum_{i_1 \in \mathcal{S}_n(v_1,\ldots, v_{k-1})} - \sum_{i_1 \in \mathcal{S}_{n-1}(v_1,\ldots, v_{k-1})} \Bigr) \\ 
& \qquad \qquad \qquad \qquad \qquad \qquad \hat T(v_1) \ldots \hat T(v_{k-1}) \hat T(-v_1 - \ldots - v_{k-1}) \,,
\end{align*}
where $\mathcal{S}_n(v_1,\ldots, v_{k-1})$ is the set of indices $i$ such that $v_1+i, v_1+v_2+i, \ldots, v_1 + \ldots v_{k-1} + i \in [0,n]$. For fixed $v_1, \ldots, v_{k-1}$, the number of elements in these sets increases by exactly one when $n$ goes to $n+1$. Hence, the difference of sums between brackets equals one and we arrive at the expression prescribed by the lemma.
\end{proof}

We can now use this lemma and an approximation argument to prove the general case.
\begin{thm}
\label{thm:lhopital}
With the notation and assumptions on an Hermitian Toeplitz operator at the beginning of this section, for any function $f$ that is absolutely continuous on the interval $[ \inf(T), \sup(T) ]$
\begin{equation} 
\lim_{n \to \infty} \Bigl( \tr f(P_{n+1} \hat T P_{n+1}) - \tr f(P_n \hat T P_n) \Bigr) = \frac{1}{2\pi} \int_{-\pi}^\pi \!d\theta\, \tr f(T(\theta)).
\end{equation} 
\end{thm}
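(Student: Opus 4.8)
The plan is to reduce the statement for absolutely continuous $f$ to the polynomial case of Lemma~\ref{lem3} by an approximation argument, the essential point being that the naive estimate $\abs{\tr(f-p)(P_n\hat T P_n)} \le nd\,\norm{f-p}_\infty$ is useless here: it grows with $n$, whereas the quantity of interest is a \emph{difference} of two such traces. Instead I would exploit the rigidity of the spectra under compression. Write $\alpha = \inf(T)$ and $\beta = \sup(T)$. Since $\alpha\,\idty \le T(\theta) \le \beta\,\idty$ a.e.\ one has $\alpha\,\idty \le \hat T \le \beta\,\idty$, and hence the eigenvalues of every compression $P_n\hat T P_n$ lie in $[\alpha,\beta]$; this is exactly what makes it legitimate to insert these matrices into an $f$ that is only defined on $[\alpha,\beta]$. (The index shift between Lemma~\ref{lem3} and the theorem is cosmetic, as both compare consecutive compressions.)

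The key structural input is Cauchy interlacing. The matrix $P_n\hat T P_n$ is the principal $nd\times nd$ submatrix of $P_{n+1}\hat T P_{n+1}$ obtained by deleting the last block, i.e.\ $d$ rows and columns; applying interlacing $d$ times gives $\lambda_k(P_{n+1}\hat T P_{n+1}) \le \lambda_k(P_n\hat T P_n) \le \lambda_{k+d}(P_{n+1}\hat T P_{n+1})$. Denoting by $F_n(t)$ the number of eigenvalues of $P_n\hat T P_n$ that are $\le t$, this interlacing is equivalent to the two-sided bound $0 \le F_{n+1}(t) - F_n(t) \le d$ for all $t$, with total increment $F_{n+1}(\beta) - F_n(\beta) = d$. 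Riemann--Stieltjes integration by parts (legitimate since $f$ is absolutely continuous and $F_{n+1}-F_n$ is of bounded variation and continuous against $f$) then yields the representation
\begin{equation*}
\tr f(P_{n+1}\hat T P_{n+1}) - \tr f(P_n\hat T P_n) = d\,f(\beta) - \int_\alpha^\beta \bigl( F_{n+1}(t) - F_n(t) \bigr)\, f'(t)\, dt,
\end{equation*}
from which, using $0 \le F_{n+1}-F_n \le d$, the crucial uniform-in-$n$ bound
\begin{equation*}
\Bigl| \tr g(P_{n+1}\hat T P_{n+1}) - \tr g(P_n\hat T P_n) \Bigr| \le d\,\abs{g(\beta)} + d\int_\alpha^\beta \abs{g'(t)}\,dt
\end{equation*}
follows for any absolutely continuous $g$ on $[\alpha,\beta]$.

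With this bound the approximation is routine. Given $\epsilon>0$, I would approximate $f' \in \c L^1[\alpha,\beta]$ by a polynomial $q$ with $\int_\alpha^\beta \abs{f'-q}\,dt < \epsilon$ (polynomials being dense in $\c L^1$ of a compact interval) and set $p(x) = f(\beta) + \int_\beta^x q$, a polynomial with $p(\beta)=f(\beta)$ and $\norm{(f-p)'}_1 < \epsilon$. Putting $g=f-p$, the displayed bound gives $\abs{\tr g(P_{n+1}\hat T P_{n+1}) - \tr g(P_n\hat T P_n)} \le d\epsilon$ uniformly in $n$; on the other side, $g(\beta)=0$ forces $\norm g_\infty \le \norm{g'}_1 < \epsilon$ on $[\alpha,\beta]$, so $\bigl| \tfrac{1}{2\pi}\int_{-\pi}^\pi \tr g(T(\theta))\,d\theta \bigr| \le d\epsilon$, again because the eigenvalues of $T(\theta)$ lie in $[\alpha,\beta]$ a.e. Since $p$ is a polynomial, Lemma~\ref{lem3} gives $\tr p(P_{n+1}\hat T P_{n+1}) - \tr p(P_n\hat T P_n) \to \tfrac{1}{2\pi}\int_{-\pi}^\pi \tr p(T(\theta))\,d\theta$. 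Combining the three estimates by the triangle inequality bounds $\limsup_n$ of the deviation of $\tr f(P_{n+1}\hat T P_{n+1}) - \tr f(P_n\hat T P_n)$ from $\tfrac{1}{2\pi}\int_{-\pi}^\pi \tr f(T(\theta))\,d\theta$ by $2d\epsilon$, and letting $\epsilon\to0$ finishes the proof.

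I expect the main obstacle to be isolating precisely why absolute continuity is the right hypothesis: the difficulty is entirely that a uniform sup-norm approximation of $f$ cannot control the difference of traces, whose error terms are of size $\sim n$, so the approximation must be upgraded to the level of derivatives in $\c L^1$ together with a matched boundary value. The interlacing estimate $0 \le F_{n+1}-F_n \le d$ is the device that converts this $\c L^1$-control of $f'$ into the required $n$-independent bound; carefully verifying it and the boundary term in the integration by parts is the technical heart, while everything else is bookkeeping.
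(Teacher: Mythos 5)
Your proposal is correct and follows essentially the same route as the paper: both proofs rest on the interlacing of eigenvalues of consecutive principal compressions, use it to turn $\c L^1$-control of $f'$ into a uniform-in-$n$ bound on the trace increments (your counting-function difference $0 \le F_{n+1}-F_n \le d$ is just a repackaging of the paper's uniformly bounded kernels $h^n_k$ built from the function $\eta$), and then conclude by approximating $f'$ in $\c L^1$ by a polynomial and invoking Lemma~\ref{lem3}. Your write-up via Stieltjes integration by parts and the boundary-matched antiderivative $p(x) = f(\beta) + \int_\beta^x q$ is a somewhat more explicit rendering of the paper's final approximation step, but it is the same argument.
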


\begin{proof}
By the continuity of the eigenvalues of a matrix and by the minimax principle~\cite{Courant89} we can label the eigenvalues of $P_n \hat T P_n$ as
\begin{align}
&\{\tau^n_{k\,j} \mid k = 1,2,\ldots, d,\ j = 1,2, \ldots, n\} \enskip \text{with}
\nonumber \\
&\inf(T) \le \tau^n_{1\,j} \le \tau^n_{2\,j} \le \cdots \le \tau^n_{d\,j} \le \sup(T) \enskip\text{and}\enskip \tau^{n+1}_{k\,j} \le \tau^n_{k\,j} \le \tau^{n+1}_{k\,j+1}.
\label{thm:lhopital:0}
\end{align}
See \cite{Horn90} for a proof of this interlacing property.

Let $f: [\inf(T), \sup(T)] \to \Cx$ be absolutely continuous with integrable derivative $g$, then for any $\lambda,\tau \in [\inf(T), \sup(T)]$
\begin{equation}
f(\lambda) = f(\tau) + \int_\tau^\lambda \!dx\, g(x).
\end{equation}
Therefore
\begin{align}
\frac{1}{2\pi}\, \int_{-\pi}^\pi \!d\theta\, f(\tau(\theta))
&= \frac{1}{2\pi}\, \int_{-\pi}^\pi \!d\theta\, \Bigl\{ f(\tau) + \int_\tau^{\tau(\theta)} \! dx\, g(x) \Bigr\} \\
&= f(\tau) + \frac{1}{2\pi}\, \int_{-\pi}^\pi \!d\theta \int_\tau^{\tau(\theta)} \!dx\, g(x) \\
&= f(\tau) + \int_{\inf(T)}^{\sup(T)} \!dx\, g(x)\, \frac{1}{2\pi}\, \int_{-\pi}^\pi \!d\theta\, \eta(\tau,x,\theta).
\label{thm:lhopital:1}
\end{align}
Here, $\eta$ is defined as
\begin{equation}
\eta(\tau,x,\theta) = 
\begin{cases} 1 &\tau < x < \tau(\theta) \\-1 &\tau(\theta) < x < \tau \\0 &\text{otherwise} \end{cases}.
\end{equation}
By~(\ref{thm:lhopital:1}) we rewrite the increment of traces of $f(P_n \hat T P_n)$ as
\begin{align}
&\tr f(P_{n+1} \hat T P_{n+1}) - \tr f(P_n \hat T P_n) \\
&\quad= \sum_{k=1}^d \Bigl\{ \sum_{j=1}^{n+1} f(\tau^{n+1}_{k\,j}) - \sum_{j=1}^n f(\tau^n_{k\,j}) \Bigr\} \\
&\quad= \frac{1}{2\pi}\, \int_{-\pi}^\pi \!d\theta\, \sum_{k=1}^d f(\tau_k(\theta)) - \sum_{k=1}^d \int_{\inf(T)}^{\sup(T)} \!dx\, g(x)\, \frac{1}{2\pi} \int_{-\pi}^\pi \!d\theta 
\nonumber\\ 
&\phantom{\quad= \frac{1}{2\pi}\, \int_{-\pi}^\pi \!d\theta\ } \Bigl\{ \sum_{j=1}^{n+1} \eta(\tau^{n+1}_{k\,j}, x, \tau_k(\theta)) - \sum_{j=1}^n \eta(\tau^n_{k\,j}, x, \tau_k(\theta)) \Bigr\} \\
&\quad= \frac{1}{2\pi}\, \int_{-\pi}^\pi \!d\theta\, \tr f(\hat T(\theta)) - \sum_{k=1}^d \int_{\inf(T)}^{\sup(T)} \!dx\, g(x)\, \frac{1}{2\pi} \int_{-\pi}^\pi \!d\theta\, h^n_k(x,\theta), 
\end{align}
with
\begin{equation}
h^n_k(x,\theta) = \sum_{j=1}^{n+1} \eta(\tau^{n+1}_{k\,j}, x, \tau_k(\theta)) - \sum_{j=1}^n \eta(\tau^n_{k\,j}, x, \tau_k(\theta)).
\end{equation}
The functions $h^n_k$ are piecewise constant with values -1, 0 or 1 due to the interlacement~(\ref{thm:lhopital:0}) of the $\tau^n_{k\,j}$. As any integrable $g$ on $[\inf(T), \sup(T)]$ can be arbitrarily well approximated in $\c L^1$-norm by polynomials, the theorem follows from Lemma~\ref{lem3}.
\end{proof}

\section{Conclusion}
We have studied a free Fermionic version of quantum Markov processes. Due to the free Fermionic nature of the states we can characterize all possible Markov processes that one can construct. The density matrices of these states can be described by a Toeplitz matrix.
By studying the behaviour of the eigenvalues of subsequent Toeplitz matrices, we have proved a new Szeg\"o theorem that allows to calculate the asymptotic entropy rate. This is what corresponds in the free Fermionic case to the method proposed by Blackwell~\cite{Blackwell57}.

It would be interesting to look for other quantum Markov processes for which an
explicit calculation of the entropy rate is possible. Processes with a high symmetry
are obvious first choices. Hopefully, such a calculation can lead to a quantum version of the Blackwell dynamical system.

\bibliographystyle{plain}
\bibliography{bibliography}

\begin{thebibliography}{10}

\bibitem{Accardi81}
L.~Accardi.
\newblock Topics in quantum probability.
\newblock {\em Physics Reports}, 77(3):169--192, November 1981.

\bibitem{Alicki01}
R.~Alicki and M.~Fannes.
\newblock {\em Quantum dynamical systems}.
\newblock Oxford University Press, Oxford, 2001.

\bibitem{Araki03}
H.~Araki and H.~Moriya.
\newblock Equilibrium statistical mechanics of {F}ermion lattice systems.
\newblock {\em Reviews in Mathematical Physics}, 15:93--198, 2003.

\bibitem{audenaert_strongly_2010}
K.~Audenaert, F.~Hiai, and D.~Petz.
\newblock Strongly subadditive functions.
\newblock {\em Acta Mathematica Hungarica}, 128(4):386--394, 2010.

\bibitem{Balslev68}
E.~Balslev, J.~Manuceau, and A.~Verbeure.
\newblock Representations of anticommutation relations and {B}ogolioubov
  transformations.
\newblock {\em Communications in Mathematical Physics}, 8(4):315--326, 1968.

\bibitem{Blackwell57}
D.~Blackwell.
\newblock The entropy of functions of finite state {M}arkov chains.
\newblock In {\em Trans. First Prague Conference on Information Theory,
  Decision Functions, and Random Processes, Prague}, pages 13--20, 1957.

\bibitem{Courant89}
R.~Courant and D.~Hilbert.
\newblock {\em Methods of Mathematical Physics}.
\newblock Wiley-VCH, first edition, 1989.

\bibitem{Dierckx08}
B.~Dierckx, M.~Fannes, and M.~Pogorzelska.
\newblock Fermionic quasifree states and maps in information theory.
\newblock {\em Journal of Mathematical Physics}, 49(3):032109, 2008.

\bibitem{Fannes92}
M.~Fannes, B.~Nachtergaele, and L.~Slegers.
\newblock Functions of {M}arkov processes and algebraic measures.
\newblock {\em Reviews in Mathematical Physics}, 4(1):39, 1992.

\bibitem{Fannes89}
M.~Fannes, B.~Nachtergaele, and R.F. Werner.
\newblock Exact antiferromagnetic ground states of quantum spin chains.
\newblock {\em Europhysics Letters {(EPL)}}, 10(7):633--637, December 1989.

\bibitem{Fannes92-1}
M.~Fannes, B.~Nachtergaele, and R.F. Werner.
\newblock Finitely correlated states on quantum spin chains.
\newblock {\em Communications in Mathematical Physics}, 144(3):443, 1992.

\bibitem{Grenander01}
U.~Grenander and G.~Szeg{\H{o}}.
\newblock {\em Toeplitz Forms and Their Applications}.
\newblock AMS Bookstore, 2001.

\bibitem{Halmos74}
P.R. Halmos.
\newblock {\em A Hilbert space problem book}.
\newblock Springer-Verlag, New York, 1974.

\bibitem{Horn90}
R.A. Horn and C.R. Johnson.
\newblock {\em Matrix Analysis}.
\newblock Cambridge University Press, 1990.

\bibitem{Kay01}
A.R. Kay and B.S. Kay.
\newblock Monotonicity with volume of entropy and of mean entropy for
  translationally invariant systems as consequences of strong subadditivity.
\newblock {\em Journal of Physics A}, 34(3):365, 2001.

\bibitem{King01}
C.~King and M.B. Ruskai.
\newblock Minimal entropy of states emerging from noisy quantum channels.
\newblock {\em Information Theory, IEEE Transactions on}, 47(1):192--209, 2001.

\bibitem{Klumper91}
A.~Kl\"umper, A.~Schadschneider, and J.~Zittartz.
\newblock Equivalence and solution of anisotropic spin-1 models and generalized
  {t-J} fermion models in one dimension.
\newblock {\em Journal of Physics A: Mathematical and General},
  24(16):L955--L959, August 1991.

\bibitem{Miranda00}
M.~Miranda and P.~Tilli.
\newblock Asymptotic spectra of hermitian block {T}oeplitz matrices and
  preconditioning results.
\newblock {\em SIAM J. Matrix Anal. Appl.}, 21(3):867--881, 2000.

\bibitem{Perez07}
D.~Perez-Garcia, F.~Verstraete, M.~M. Wolf, and J.~I. Cirac.
\newblock Matrix product state representations.
\newblock {\em Quantum Info. Comput.}, 7(5):401--430, July 2007.

\bibitem{Powers70}
R.T. Powers and E.~St{\o}rmer.
\newblock Free states of the canonical anticommutation relations.
\newblock {\em Communications in Mathematical Physics}, 16(1):1--33, 1970.

\bibitem{Rudin87}
W.~Rudin.
\newblock {\em Real and complex analysis}.
\newblock McGraw-Hill Book Co., New York, third edition, 1987.

\bibitem{Shale64}
D.~Shale and W.~F. Stinespring.
\newblock States of the {C}lifford algebra.
\newblock {\em The Annals of Mathematics}, 80(2):365--381, 1964.

\bibitem{Wouters09}
J.~Wouters, M.~Fannes, I.~Akhalwaya, and F.~Petruccione.
\newblock Classical capacity of a qubit depolarizing channel with memory.
\newblock {\em Physical Review A}, 79(4):042303, apr 2009.

\bibitem{Zyczkowski_duality_2004}
Karol \.Zyczkowski and Ingemar Bengtsson.
\newblock On duality between quantum maps and quantum states.
\newblock {\em Open Systems \& Information Dynamics}, 11:3--42, March 2004.

\end{thebibliography}

\end{document}